\pgfplotsset{compat=1.3}
\newcommand{\diam}{\operatorname{diam}}
\newcommand{\subject}{\text{subject to}}
\newcommand{\maximize}{\text{maximize}}
\newcommand{\mc}{\mathcal}
\newcommand{\hi}{\mathit{h}}   
\newcommand{\Hc}{\mathcal{H}}
\newcommand{\Bc}{\mathcal{B}}
\newcommand{\Je}{J_{\underline{\epsilon}}}  
\newcommand{\Jeps}{J_{\epsilon}}
\newcommand{\Gc}{\mathcal{G}_{c}}
\newcommand{\outnbrs}{\mathcal{N}_O}
\newcommand{\innbrs}{\mathcal{N}_I}
\newcommand{\dist}{\operatorname{dist}}
\newcommand{\orc}{\operatorname{ORC} }
\newcommand{\CPC}{{Cutting-Plane~Consensus~}}
\newcommand{\1}{\boldsymbol{1}}
\newtheorem{theorem}{Theorem}[section]
\newtheorem{proposition}[theorem]{Proposition}
\newtheorem{corollary}[theorem]{Corollary}
\newtheorem{lemma}[theorem]{Lemma}
\newtheorem{applemma}{Lemma}[section]
\newtheorem{remark}[theorem]{Remark}
\newtheorem{assumption}[theorem]{Assumption}
\newcommand\oprocendsymbol{\hbox{$\square$}}
\newcommand\oprocend{\relax\ifmmode\else\unskip\hfill\fi\oprocendsymbol}
\def\eqoprocend{\tag*{$\square$}}
\begin{document}

\title{A Polyhedral Approximation Framework for\\ Convex and Robust Distributed
  Optimization}

\author{Mathias B{\"u}rger, Giuseppe Notarstefano, and Frank Allg{\"o}wer
  \thanks{M. B{\"u}rger and F. Allg{\"o}wer thank the German Research Foundation
    (DFG) for financial support within the Cluster of Excellence in Simulation
    Technology (EXC 310/1) at the University of Stuttgart.
     The research of G. Notarstefano has received funding from the European
    Community's Seventh Framework Programme (FP7/2007-2013) under grant
    agreement no. 224428 (CHAT).
    } 
  \thanks{ M. B{\"u}rger and F. Allg{\"o}wer are with the Institute for
    Systems Theory and Automatic Control, University of Stuttgart,
    Pfaffenwaldring 9, 70550 Stuttgart, Germany, \texttt{ \{mathias.buerger,
      frank.allgower\}@ist.uni-stuttgart.de}. }  \thanks{ Giuseppe Notarstefano
    is with the Department of Engineering, University of Lecce, Via per
    Monteroni, 73100 Lecce, Italy, \texttt{giuseppe.notarstefano@unile.it.}  } 
  }

\maketitle

\begin{abstract}
  In this paper we consider a general problem set-up for a wide class of convex
  and robust distributed optimization problems in peer-to-peer networks.
  In this set-up convex constraint sets are distributed to the network
  processors who have to compute the optimizer of a linear cost function subject
  to the constraints.
  We propose a novel fully distributed algorithm, named \emph{cutting-plane
    consensus}, to solve the problem, based on an outer polyhedral approximation
  of the constraint sets.
  Processors running the algorithm compute and exchange linear approximations of
  their locally feasible sets.
  Independently of the number of processors in the network, each processor
  stores only a small number of linear constraints, making the algorithm
  scalable to large networks.
  The cutting-plane consensus algorithm is presented and analyzed for the
  general framework.
  Specifically, we prove that all processors running the algorithm agree on an
  optimizer of the global problem, and that the algorithm is tolerant to node
  and link failures as long as network connectivity is preserved.
  Then, the cutting plane consensus algorithm is specified to three different
  classes of distributed optimization problems, namely (i) \emph{inequality
    constrained problems}, (ii) \emph{robust optimization problems}, and (iii)
  \emph{almost separable optimization problems with separable objective
    functions and coupling constraints}. For each one of these problem classes
  we solve a concrete problem that can be expressed in that framework and
  present computational results.  That is, we show how to solve: position
  estimation in wireless sensor networks, a distributed robust linear program
  and, a distributed microgrid control problem.
\end{abstract}

\section{Introduction}
The ability to solve optimization problems by local data exchange between
identical processors with small computation and communication capabilities is a
fundamental prerequisite for numerous decision and control systems.
Algorithms for such distributed systems have to work within the following
specifications \cite{Bullo2009}.
All processors running the algorithm are exactly identical and each processor
has only a small memory available. The data assigned by the algorithm to a
processor should be independent of the overall network size or only slowly
growing with the degree of the processor node in the network.
None of the processors has global information or can solve the problem
independently.


This paper addresses a class of optimization problems in distributed processor
networks with asynchronous communication.
Distributed, or peer-to-peer, optimization is related to parallel
\cite{Bertsekas1997} or large-scale optimization \cite{Lasdon2002}, but has to
meet further requirements, such as asynchronous communication and lack of shared
memory or coordination units.
Distributed optimization has gained significant attention in the last
years. Initially major attention was given to asynchronous distributed
subgradient methods \cite{Tsitsiklis1986}, \cite{AN-AO:09}.
Asynchronous distributed primal and dual subgradient algorithms are important
tools in network utility maximization and have been intensively studied from a
communication networks perspective, see \cite{Low1999}, \cite{Low2002}.

Combined with projection operations, subgradient methods
can also solve constrained optimization problems \cite{Nedic2010},
\cite{Low1999}.
In the last years, the research scope has been widened and now several different
distributed algorithms are explored, each suited for particular optimization
problems. Distributed Newton methods are proposed for Network Utility
Maximization \cite{Zargham2011}, \cite{Zargham2011a}, or unconstrained strongly
convex problems \cite{Zanella2011}.
Distributed variants of Alternating Direction Method of Multipliers (ADMM) have
been proposed for distributed estimation \cite{Schizas2008}, and in the wider
context of machine learning \cite{Boyd2010}.
The ADMM shows often a good convergence rate. However, one structural difference
between ADMM and distributed algorithms such as subgradient methods or the novel
method proposed in this paper has to be emphasized. In a centralized
implementation ADMM requires a coordination step. Distributed ADMM replaces this
central coordination with a consensus algorithm. However, this requires a
\emph{synchronization} between all processors in the network, i.e., all
processors have to switch synchronously between local computations and the
consensus algorithm. Fully distributed algorithms, as the one proposed here,
work \emph{asynchronously} and every processor can switch between local
computations and communications at its own pace.

An alternative research direction was established in \cite{GN-FB:06z}  and \cite{Notarstefano2009},
where distributed abstract optimization problems were considered.
A similar approach was explored in \cite{Burger2011a}, \cite{Burger2011b} for a distributed simplex algorithm that solves degenerate linear programs and multi-agent assignment problems. 
Some results on the use of distributed cutting-plane methods for robust optimization have been presented in \cite{Burger2012}. The results of \cite{Burger2012} are presented in the present paper in the wider context of general distributed optimization using cutting-plane methods.
%

\textbf{The contributions of this paper are as follows. }
Motivated by several important applications, we consider a general distributed
optimization framework in which each processor has knowledge of a convex
constraint set and a linear cost function has to be optimized over the
intersection of these constraint sets. It is worth noting that linearity of the
cost function is not a limitation and that strict convexity of the optimization
problem is not required.
A novel fully distributed algorithm named \emph{cutting plane consensus} is
proposed to solve this class of distributed problems.
The algorithm uses a polyhedral outer-approximation of the constraint set.
Processors performing the algorithm generate and exchange a \emph{small and
  fixed} number of linear constraints, which provide a polyhedral approximation
of the original optimization problem.
Then, each processor updates its local estimate of the globally optimal solution
as the minimal 2-norm solution of the approximate optimization problem.
We prove the correctness of the algorithm in the sense that all processor
asymptotically agree on a globally optimal solution.
We show that the proposed algorithm satisfies all requirements of peer-to-peer
processor networks. In particular, it requires only a strictly bounded local
memory and the communication is allowed to be asynchronous.
We also prove that the algorithm has an inherent tolerance against the failure
of single processors.

To highlight the generality of the proposed polyhedral approximation method, we
show how it can solve three different representations of the general distributed
convex program.
First, we consider constraint sets defined by nominal convex \emph{inequality
  constraints}.
Second, we discuss the method for a class of \emph{uncertain} or
\emph{semi-infinite constraints}. We show that the novel algorithm is capable of
computing robust solutions to uncertain problems in peer-to-peer networks.
Finally, we show that \emph{almost separable convex programs}, i.e., convex
optimization problems with separable objective functions and coupling
constraints, can be formulated in the general framework when their dual
representation is considered. Applied to this problem class the \CPC algorithm
can be seen as a fully distributed version of the classical Dantzig-Wolfe
decomposition, or column generation method, with no central coordinating master
program.

The general algorithm derived in the paper applies directly to each of the three
problem classes, and all convergence guarantees remain valid. We present for
each problem class a relevant decision problem, which can be solved by the novel
algorithm. In particular, it is shown that localization in sensor networks,
robust linear programming and distributed control of microgrids can be solved by the algorithm. Additionally, computational studies are presented which show that the
novel algorithm has an advantageous time complexity.


\textbf{Relation to other optimization methods:} The general problem formulation
of this paper is similar to the formulation considered in
\cite{Nedic2010}. However, while the approach in \cite{Nedic2010} requires a
projection operation, which might be computationally
  expensive
for some constraint classes, our approach requires only the knowledge of a
polyhedral approximation.
Additionally, our method works on general time-varying directed graphs, and does
not require a balanced communication.

The almost separable optimization problem setup studied in Section
\ref{sec.SeparableCost} is the classical setup for large scale
optimization.
Dual decomposition methods decompose these problems into a master program and
several subproblems.
Cutting-plane methods can be used to solve the master program, leading to
algorithms that originate in the classical Dantzig-Wolfe decomposition 
\cite{Dantzig1961}, \cite{Lasdon2002}.
The algorithm we propose differs significantly from classical decomposition
methods. Indeed, our algorithm performs in asynchronous peer-to-peer networks
with identical processors, without any central or coordinating master program.
 
A distributed ADMM implementation \cite{Boyd2010} uses an average consensus
algorithm to replace the update of the central master program.
While this allows to perform all computations decentralized, it requires a
\emph{synchronization} between the processors. Additionally, preforming a
consensus algorithm repeatedly might require many communication steps between
the processors.
In contrast, our method requires neither synchronized communication nor
a repeated averaging.

\textbf{The remainder of the paper is organized as follows.} The optimization
problem and the processor network model are introduced in Section
\ref{sec.ProblemFormulation}. In Section \ref{sec.OuterApproximation} the ideas
of polyhedral outer-approximation and minimal norm linear programming are
reviewed. The main contribution of this paper, the \CPC algorithm, is presented
in a general form in Section \ref{sec.Algorithm}, where also the correctness of
the algorithm and its fault-tolerance are proven.
The application of the algorithm to inequality constrained problems and to a
localization problem in sensor networks is presented in Section
\ref{sec.Inequality}.
In Section \ref{sec.Robustness} it is shown how the algorithm can be used to
solve distributed robust optimization problems, and a computational study is
presented, which compares the completion time of the novel algorithm to
an ADMM algorithm.
The application of the \CPC algorithm to almost separable convex optimization
problems and to distributed microgrid control is discussed in Section
\ref{sec.SeparableCost}. Finally, a concluding discussion is given in Section
\ref{sec.Conclusions}.

\section{Problem Formulation and Network Model} \label{sec.ProblemFormulation}

We consider a set of processors $V = \{1,\ldots,n\}$, each equipped with
communication and computation capabilities.  Each processor $i$ has knowledge of
a convex and closed constraint set $\mc{Z}_{i} \subset \mathbb{R}^{d}$.  The
processors have to agree on a decision vector $z \in \mathbb{R}^{d}$ maximizing
a linear objective over the intersection of all sets $\mc{Z}_{i}$. That is, the
processors have to solve the distributed convex optimization problem
\begin{align}
  \begin{split} \label{prob.Basic}
    \maximize &\quad c^{T}z \\
    \subject &\quad z \in \bigcap_{i=1}^{n} \mc{Z}_{i}.
  \end{split}
\end{align}
We denote the feasible set in the following as $\mc{Z} :=
\bigcap_{i=1}^{n}\mc{Z}_{i}$. We assume
that $\mc{Z}$ is non-empty and that \eqref{prob.Basic} has a finite optimal
solution.

The communication between the processors is modeled by a directed graph
(digraph) $\Gc=(V,E)$, named \emph{communication graph}.
The node set $V = \{1,\ldots,n\}$ is the set of processor identifiers, and the
edge set $E \subset \{1,\ldots,n\}^{2}$ characterizes the communication among
the processors.
If the edge-set does not change over time, the graph is called static otherwise
it is called time-varying.  We model the communication with time-varying
digraphs of the form $\Gc(t)=(V,E(t))$, where $t \in \mathbb{N}$ represents a
slotted universal time.
A graph $\Gc(t)$ models the communication in the sense that at time $t$ there is
an edge from node $i$ to node $j$ if and only if processor $i$ transmits
information to processor $j$ at time $t$.
The time-varying set of outgoing (incoming) \emph{neighbors} of node $i$ at time
$t$, i.e., the set of nodes to (from) which there are edges from (to) $i$ at
time $t$, is denoted by $\outnbrs(i,t)$ ($\innbrs(i,t)$).
In a static directed graph, the minimum number of edges between node $i$ and $j$
is called the \emph{distance} from $i$ to $j$ and is denoted by
$\dist(i,j)$. The maximum $\dist(i,j)$ taken over all pairs $(i,j)$ is the
\emph{diameter} of the graph $\Gc$ and is denoted by $\diam(\Gc)$.
A static digraph is said to be \emph{strongly connected} if for every pair of
nodes $(i,j)$ there exists a path of directed edges that goes from $i$ to $j$.
For the time-varying communication graph we rely on the concept of a jointly
strongly connected graph.

\begin{assumption}[Joint Strong Connectivity] \label{ass.PeriodicConnectivity}
  For every time instant $t \in \mathbb{N}$, the union digraph
  $\Gc^{\infty}(t):=\cup_{\tau = t}^{\infty} \Gc(\tau)$ is strongly
  connected. \oprocend
\end{assumption}

In this paper we develop a distributed, asynchronous algorithm solving problem
\eqref{prob.Basic} according to the network model described above.
Each processor stores a small set of data and transmits at each time instant
these data to its out-neighbors $\outnbrs(i,t)$.
It is worth noting that in general it is impossible to encode the convex set
$\mc{Z}_{i}$ with finite data. Thus, the information about the sets $\mc{Z}_{i}$
cannot be explicitly exchanged among the processors.

\section{Polyhedral Approximation and Minimal Norm Linear
  Programming} \label{sec.OuterApproximation}

We start recalling some important concepts form convex and linear optimization.
We will work in the following with half-spaces of the form $ \hi := \{ z: a^{T}z
- b \leq 0 \}, $
where $a \in \mathbb{R}^{d}$ and $b \in \mathbb{R}$.
A half-space is called a \emph{cutting-plane} if it satisfies the following
properties. Given a closed convex set $\mc{S} \subset \mathbb{R}^{d}$ and a
query point $z_{q} \notin \mc{S}$, a cutting-plane $h(z_{q})$ separates $z_{q}$
from $\mc{S}$, i.e., $a(z_{q}) \neq 0$ and
\begin{align} \label{eqn.CuttingPlaneBasic} a^{T}(z_{q})z \leq b(z_{q}) \quad
  \mbox{for\;all}\; z \in \mc{S}, \quad \mbox{and} \quad a^{T}(z_{q})z_{q} -
  b(z_{q}) = s(z_{q}) > 0.
\end{align}
The concept of cutting-planes leads to the first algorithmic primitive, the
cutting-plane oracle.
\begin{quote}
  \textbf{Cutting-Plane Oracle} $\orc(z_{q},\mc{S})$: queried at $z_{q} \in
  \mathbb{R}^{d}$ for the set $\mc{S}$. If (i) $z_{q} \notin \mc{S}$ then it
  returns a cutting-plane $\hi(z_{q})$, separating $z_{q}$ and $\mc{S}$,
  otherwise (ii) it asserts that $z_{q} \in \mc{S}$ and returns an empty $\hi$.
 \end{quote}

We make the following assumption on the cutting plane oracle, following the general cutting-plane framework of \cite{Eaves1971}.
\begin{assumption}\label{ass.Separator} The cutting-plane oracle $\orc(z_{q},\mc{S})$ is such that 
  (i) $\|a(z_{q})\|_{2} < \infty$ and (ii) $z_{q}(t) \rightarrow \bar{z}$ and
  $s(z_{q}(t)) \rightarrow 0$ implies that $\bar{z} \in \mc{S}.$
\end{assumption}
Note that this assumption is not very restrictive and holds for many important
problem formulations. In fact, we discuss three important problem classes for which the
assumption holds.

Given a collection of cutting-planes $H = \cup_{k=1}^{m} \hi_{k}$, the
polyhedron induced by these cutting-planes is $\mc{H} = \{ z : A_{H}^{T}z \leq
b_{H}\}$, with the matrix $A_{H} \in \mathbb{R}^{d \times m}$ as $A_{H}
=[a_{1},\ldots,a_{m}]$, and the vector $b_{H} = [b_{1},\ldots,b_{m}]^{T}$.

\begin{remark}[Cutting plane notation] We refer to both a half-space $h$ and the data
  inducing the half space with a small italic letter. A collection of
  cutting-planes is denoted with italic capital letters, e.g., $H =
  \bigcup_{k=1}^{m} h_{k}$. For a collection of cutting-planes, we denote the
  induced polyhedron with capital calligraphic letters, e.g., $\mc{H}$.
  Please note the following notational aspect. A collection of cutting-planes
  $B$ that is a subset of the cutting-planes contained in $H$ is denoted as $B
  \subset H$, while the induced polyhedra satisfy $\mc{B} \supseteq
  \mc{H}$. \oprocend
\end{remark}

Assume that each cutting-plane $h_{i}$ is generated as a separating hyperplane
for some set $\mc{Z}_{j}$, and let $H$ be a collection of cutting-planes. The
linear \emph{approximate program}
\begin{align}
  \begin{split} \label{prob.BasicApproxLP} \max_{z} &\quad c^{T}z \quad
    \mathrm{s.t.} \; A_{H}^{T}z \leq b_{H}
  \end{split}
\end{align}
is then a relaxation of the original optimization problem \eqref{prob.Basic}
since the polyhedron $\mc{H} = \{z:A_{H}^{T}z \leq b_{H}\}$ is an outer
approximation of the original constraint set $\mc{Z} =
\bigcap_{i=1}^{n}\mc{Z}_{i}$.
We denote in the following the optimal value of \eqref{prob.BasicApproxLP} as $
\gamma_{H} $, i.e., $\gamma_{H} := \max_{z \in \mc{H}} c^{T}z$. The linear
program \eqref{prob.BasicApproxLP} has in general several optimizers, and we
denote the set of all optimizers of \eqref{prob.BasicApproxLP} with
\begin{align}
  \Gamma_{H} := \{z \in \mc{H} : \; c^{T}z \geq c^{T}v, \forall v \in \mc{H}
  \}. \label{eqn.Gamma}
\end{align}
It is a standard result in linear programming that $\Gamma_{H}$ is always a
polyhedral set.
We consider throughout the paper the unique optimal solution to
\eqref{prob.BasicApproxLP} which has the minimal 2-norm, i.e., we aim to compute
\begin{align}
  z_{H}^{*} = \mathrm{arg} \min_{z \in \Gamma_{H}}
  \|z\|_{2}. \label{prob.LPprojection}
\end{align}
Finding a minimal norm solution to a linear program is a classical problem and
various solution methods are proposed in the literature. Starting from the early
reference \cite{Mangasarian1983} research on this topic is still actively
pursued \cite{Zhao2002}. The minimal 2-norm solution can be
efficiently computed as the solution of a quadratic program.
\begin{proposition} \label{prop.2normComputation} Let $u^{*} \in
  \mathbb{R}^{|H|},\alpha^{*} \in \mathbb{R}, l^{*}\in \mathbb{R}^{d}$ be the
  optimal solution to
  \begin{align}
    \begin{split} \label{prob.2NormDual}
      \min_{u, \alpha, l} \quad & \frac{1}{2} (A_{H}u + \alpha c)^{T}(A_{H}u + \alpha c) + b_{H}^{T}u + c^{T}l \\
      \mathrm{s.t.} \quad & A_{H}^{T}l - \alpha b \geq 0, \; u \geq 0
    \end{split}
  \end{align}
  then $ z^{*}_{H} = -A_{H}u^{*} - \alpha^{*}c $ solves
  \eqref{prob.LPprojection}. \oprocend
\end{proposition}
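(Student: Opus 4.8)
The plan is to realize \eqref{prob.2NormDual} as the Lagrangian dual of a quadratic reformulation of \eqref{prob.LPprojection} in which every constraint is linear and, crucially, none of the problem data depends on the a priori unknown optimal value $\gamma_H$.

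First I would remove the hidden constraint $c^Tz=\gamma_H$ that defines $\Gamma_H$. Since \eqref{prob.BasicApproxLP} is feasible (because $\mc{Z}\subseteq\mc{H}$ is nonempty) and has a finite optimum, linear programming strong duality applies to $\max\{c^Tz:\ A_H^Tz\le b_H\}$ and its dual $\min\{b_H^Tw:\ A_Hw=c,\ w\ge 0\}$: both are solvable with common value $\gamma_H$. The primal--dual optimality characterization then yields
\[
  \Gamma_H=\bigl\{\,z:\ A_H^Tz\le b_H\ \text{ and }\ \exists\,w\ge 0\ \text{ with }\ A_Hw=c,\ c^Tz=b_H^Tw\,\bigr\}
\]
(for ``$\subseteq$'' take $w$ to be any dual-optimal solution; for ``$\supseteq$'' weak duality forces $c^Tz=\gamma_H$). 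Hence \eqref{prob.LPprojection} is equivalent --- in the sense of having the same, unique (by strict convexity in $z$) optimal $z$-component --- to the convex quadratic program in $(z,w)$
\[
  \min_{z,\,w}\ \tfrac12\|z\|_2^2\quad\text{s.t.}\quad A_H^Tz\le b_H,\ \ A_Hw=c,\ \ w\ge 0,\ \ c^Tz-b_H^Tw=0 .
\]

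Next I would dualize this quadratic program. Assigning multipliers $u\ge 0$ to $A_H^Tz\le b_H$, $l\in\mathbb{R}^d$ to $A_Hw=c$, $\nu\ge 0$ to $-w\le 0$, and a free scalar $\alpha$ to $c^Tz-b_H^Tw=0$, the Lagrangian is minimized over $z$ at $z=-(A_Hu+\alpha c)$; being affine in $w$, finiteness of the infimum forces the $w$-coefficient to vanish, i.e.\ $\nu=A_H^Tl-\alpha b_H$, and $\nu\ge 0$ becomes $A_H^Tl-\alpha b_H\ge 0$. Substituting back, the $w$-terms cancel and the $z$-terms collapse to $-\tfrac12\|A_Hu+\alpha c\|_2^2$, so the dual reads $\max\{-\tfrac12\|A_Hu+\alpha c\|_2^2-b_H^Tu-c^Tl:\ u\ge 0,\ A_H^Tl-\alpha b_H\ge 0\}$, which is exactly the negative of \eqref{prob.2NormDual}.

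Finally, since the $(z,w)$ quadratic program is convex with only linear constraints and a finite optimal value, strong duality holds and the dual optimum is attained; letting $(u^*,\alpha^*,l^*)$ attain it (i.e.\ solve \eqref{prob.2NormDual}), strong duality together with the $z$-stationarity relation identifies $z^*:=-(A_Hu^*+\alpha^* c)=-A_Hu^*-\alpha^* c$ as the optimal $z$-component of the $(z,w)$ program, which by the first step is precisely $z_H^*$. The main obstacle is that first step: replacing the implicit optimality condition $c^Tz=\gamma_H$ by an explicit linear system through the auxiliary dual vector $w$, which is what makes \eqref{prob.2NormDual} a problem with computable data. The dualization and the algebraic simplification are then routine, and the only other point requiring care is invoking zero duality gap and dual attainment, which are standard for convex quadratic programs with linear constraints and finite optimal value.
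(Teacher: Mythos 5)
Your proposal is correct and follows essentially the same route as the paper: both encode membership in $\Gamma_H$ via the LP primal--dual (KKT) system $A_H^Tz\le b_H$, $A_Hy=c$, $y\ge 0$, $c^Tz=b_H^Ty$, and then obtain \eqref{prob.2NormDual} as the Lagrangian dual of the resulting quadratic program, with $z=-(A_Hu+\alpha c)$ coming from stationarity in $z$. The only (immaterial) difference is that the paper minimizes the Lagrangian over $y\ge 0$ directly to get $y^*=0$ under $A_H^Tl-\alpha b_H\ge 0$, whereas you introduce and eliminate an explicit multiplier for $w\ge 0$; you are also more explicit about strong duality and dual attainment, which the paper leaves implicit.
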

The proof of this result is presented in Appendix A.
The minimal 2-norm solution has the important property that it always maximizes
a strongly concave cost function.
\begin{lemma} \label{prop.QuadraticObjective} Let a set of cutting-planes define
  the polyhedron $\mc{H}$ and let $z^{*}_{H}$ be the minimal 2-norm solution to
  \eqref{prob.BasicApproxLP}. Consider the quadratically perturbed linear
  objective
$$J_{\epsilon}(z) = c^{T}z - \frac{\epsilon}{2}\|z\|^{2}_{2} $$
parametrized with a constant $\epsilon>0$. Then there exists a $\bar{\epsilon} >
0$ such that for any $\epsilon \in [0,\bar{\epsilon}]$
\begin{align}
  z^{*}_{H} = \mathrm{arg} \max_{z \in \mc{H}} \;
  J_{\epsilon}(z). \label{prob.QuadraticPerturbation}
\end{align} \oprocend
\end{lemma}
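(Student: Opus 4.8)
The claim is that the minimal 2-norm optimal solution $z_H^*$ of the linear program $\max\{c^Tz : z\in\mc{H}\}$ coincides, for all sufficiently small $\epsilon>0$, with the unique maximizer of the strongly concave function $J_\epsilon(z)=c^Tz-\frac{\epsilon}{2}\|z\|_2^2$ over $\mc{H}$. My plan is to exploit the fact that $z_H^*$ is characterized by a two-level optimization: it maximizes $c^Tz$ over $\mc{H}$ and, among those maximizers (the face $\Gamma_H$), minimizes $\|z\|_2$. The function $J_\epsilon$ is a single-level regularization of $c^Tz$, and the standard theory of "exact regularization" of linear programs (Mangasarian–Meyer, and the references \cite{Mangasarian1983, Zhao2002} already cited) says precisely that for small enough $\epsilon$ the regularized problem recovers the regularizer-minimal optimal solution.

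First I would note that $J_\epsilon$ is strongly concave, so the maximizer over the nonempty closed convex set $\mc{H}$, call it $z_\epsilon^*$, exists and is unique for each $\epsilon>0$. Next, I would show $z_\epsilon^* \to \Gamma_H$ and in fact identify the limit: using optimality, $J_\epsilon(z_\epsilon^*)\ge J_\epsilon(z_H^*)$, i.e. $c^Tz_\epsilon^* - \frac{\epsilon}{2}\|z_\epsilon^*\|^2 \ge c^Tz_H^* - \frac{\epsilon}{2}\|z_H^*\|^2 = \gamma_H - \frac{\epsilon}{2}\|z_H^*\|^2$. Since $c^Tz_\epsilon^* \le \gamma_H$, this forces $\frac{\epsilon}{2}\|z_\epsilon^*\|^2 \le \frac{\epsilon}{2}\|z_H^*\|^2 + (\gamma_H - c^Tz_\epsilon^*)$, giving both a uniform bound on $\|z_\epsilon^*\|$ and, rearranging, $\gamma_H - c^Tz_\epsilon^* \le \frac{\epsilon}{2}(\|z_H^*\|^2 - \|z_\epsilon^*\|^2)\le \frac{\epsilon}{2}\|z_H^*\|^2 \to 0$, so any accumulation point of $z_\epsilon^*$ lies in $\Gamma_H$; moreover the chain of inequalities also yields $\|z_\epsilon^*\|\le \|z_H^*\|$ on the whole sequence, so any accumulation point has norm at most $\|z_H^*\|$ and, being in $\Gamma_H$, must equal $z_H^*$ by uniqueness of the minimal-norm point. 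Hence $z_\epsilon^* \to z_H^*$ as $\epsilon \downarrow 0$.

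The remaining — and genuinely load-bearing — step is to upgrade this asymptotic statement to an \emph{exact} one: there is a threshold $\bar\epsilon>0$ below which $z_\epsilon^* = z_H^*$ identically, not just in the limit. The clean way is via KKT conditions. Write $\mc{H}=\{z: A_H^Tz\le b_H\}$. The point $z_H^*$ solving \eqref{prob.LPprojection} satisfies, by the KKT conditions of that QP over the face $\Gamma_H$ (equivalently, by Proposition~\ref{prop.2normComputation}), a system of the form: there exist multipliers with $z_H^* + A_H\lambda = \mu c$ for some $\mu\ge 0$ and $\lambda\ge 0$ supported on the active constraints at $z_H^*$, together with $A_H^Tz_H^*\le b_H$ and complementary slackness. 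I would show that feeding $z=z_H^*$ with multiplier vector $\frac{1}{\epsilon}\lambda$ (and using that $c$ itself decomposes over the active rows as $c=A_H\nu$ for some $\nu\ge 0$, since $z_H^*$ maximizes $c^Tz$) satisfies the KKT conditions for $\max_{z\in\mc{H}}J_\epsilon(z)$, namely $\epsilon z = c - A_H\eta$ with $\eta\ge 0$ supported on active constraints — this is solvable with $\eta\ge 0$ exactly when $\epsilon$ is small enough relative to the finitely many active vertices of $\Gamma_H$, which is where $\bar\epsilon$ comes from. Since $J_\epsilon$ is strongly concave, the KKT point is the unique global maximizer, so $z_\epsilon^* = z_H^*$ for $\epsilon\le\bar\epsilon$.

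The main obstacle is the last step: making the "small enough $\epsilon$" quantitative and arguing that a \emph{single} $\bar\epsilon>0$ works for the whole interval $[0,\bar\epsilon]$ rather than a sequence $\epsilon_k\to 0$. I expect to handle this by a finiteness/polyhedrality argument — $\Gamma_H$ has finitely many vertices, and $z_H^*$ is the projection of the origin onto the polyhedron $\Gamma_H$, so the relevant multiplier inequalities are a finite linear feasibility system in $\epsilon$ whose solution set is a closed interval containing $0$ in its relative interior — or, alternatively, by invoking the exact-regularization result of \cite{Mangasarian1983} directly, which gives this threshold behavior for free. The rest is bookkeeping with KKT systems and does not require detailed computation here.
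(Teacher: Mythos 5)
Your proposal is correct and follows essentially the same route as the paper: the load-bearing step in both is to combine the KKT multipliers of the minimal-norm QP (with multiplier $\mu^{*}$ attached to the constraint $c^{T}z \geq \gamma_{H}$) with the LP optimality multipliers $y^{*}\geq 0$ satisfying $c = A_{H}y^{*}$ into a single nonnegative multiplier vector for the perturbed problem, the nonnegativity requirement producing the threshold $\bar{\epsilon} = 1/\mu^{*}$ (any $\epsilon>0$ working when $\mu^{*}=0$). Your preliminary asymptotic argument that $z^{*}_{\epsilon}\to z^{*}_{H}$ is superfluous once the exact KKT construction goes through, and the correct combined multiplier is $\pi^{*} = \epsilon\lambda^{*} + (1-\epsilon\mu^{*})y^{*}$ rather than $\tfrac{1}{\epsilon}\lambda$, but neither point affects the soundness of the plan.
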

The proof of this result is very similar to the classical proof presented in
\cite{Mangasarian1979}. However, since the considered set-up is slightly
different and the result is fundamental for the methodologies developed in the
paper, we present the proof in Appendix B.

Any solution to a (feasible) linear program of the form
\eqref{prob.BasicApproxLP} is fully determined by at most $d$ constraints.
This is naturally also true for the minimal 2-norm solution of a linear program.
We formalize this property with the notion of \emph{basis}. Given a collection of
cutting-planes $H$, we say that a subset $B \subseteq H$ is a basis of $H$ if
the minimal 2-norm solution to the linear program defined with the constraint
set $B$, say $z_{B}^{*}$, is identical to the minimal 2-norm solution of the
linear program defined with the constraint set $H$, say $z_{H}^{*}$, i.e.,
$z_{B}^{*} = z_{H}^{*}$, while for any strict subset of cutting-planes $B'
\subset B$, it holds that $z_{B'}^{*} \neq z_{B}^{*}$.  For a feasible problem,
the cardinality of a basis is bounded by the dimension of the problem, i.e.,
$|B| \leq d$.
Throughout this paper, a basis is always considered to be a basis with respect
to the 2-norm solution of the linear program and a
basis computation requires to compute the solution to problem
\eqref{prob.2NormDual}. Note, however, that the active constraints at an optimal
point $z_{H}^{*}$ are always a superset of a basis at this point and are exactly
a basis if the problem is not degenerate.
Therefore, in most cases it will be sufficient to find the active constraints,
which are easy to detect.

\section{The Cutting-Plane Consensus Algorithm} \label{sec.Algorithm}

For a network of processors, we propose and analyze the \CPC algorithm to solve
distributed convex optimization problems of the form \eqref{prob.Basic}.

\subsection*{The Distributed \CPC Algorithm}
The algorithm to solve general distributed optimization problems
\eqref{prob.Basic} is as follows. \\

\begin{quote}
  \textbf{Cutting-Plane Consensus:}
  Processors store and update collections of cutting-planes. The cutting-planes
  stored by agent $i$ at iteration $t$ are always a basis of a corresponding
  linear approximate program \eqref{prob.BasicApproxLP}, and are denoted by
  $B^{[i]}(t)$.
  A processor initializes its local collection of cutting-planes $B^{[i]}_{0}$
  with a set of cutting-planes chosen such that $\mc{B}^{[i]}_{0} \supset
  \mc{Z}_{i}$ and $\max_{z \in \mc{B}^{[i]}_{0}} c^{T}z < \infty$.

  Each processor repeats then the following steps:
  \begin{enumerate}
  \item[(S1)] it transmits its current basis $B^{[i]}(t)$ to all its
    out-neighbors $\outnbrs(i,t)$ and receives the basis of its in-neighbors
    $Y^{[i]}(t) = \bigcup_{j \in \innbrs(i,t)} B^{[j]}(t)$;
  \item[(S2)] it defines $H_{tmp}^{[i]}(t) = B^{[i]}(t) \cup Y^{[i]}(t)$, and
    computes (i) a query point $z^{[i]}(t)$ as the minimal 2-norm solution to
    the approximate program \eqref{prob.BasicApproxLP}, i.e.,
$$ z^{[i]}(t) = \mathrm{arg} \min_{z \in \Gamma_{H_{tmp}^{[i]}(t)}} \|z\|_{2} $$ 
and (ii) a minimal set of active constraints $B_{tmp}^{[i]}(t)$;
\item[(S3)] it calls the cutting-plane oracle for the constraint set
  $\mc{Z}_{i}$ at the query point $z^{[i]}(t)$,
$$ h(z^{[i]}(t)) = \orc(z^{[i]}(t),\mc{Z}_{i});$$ 
\item[(S4)] it updates its collection of cutting planes as follows: if
  $z^{[i]}(t) \in \mc{Z}_{i}$ then $B^{[i]}(t+1) = B_{tmp}^{[i]}(t)$, otherwise
  $B^{[i]}(t+1)$ is set to the minimal basis of $B_{tmp}^{[i]}(t) \cup
  h(z^{[i]}(t)) $. \\
\end{enumerate}
\end{quote}

The four steps of the algorithm can be summarized as communication (S1),
computation of the query point (S2), generation of cutting-plane (S3) and
dropping of all inactive constraints (S4).
The \CPC algorithm is explicitly designed for the use in processor networks. We
want to emphasize here the following four aspects of the algorithm. \\
\begin{LaTeXdescription}
\item[Distributed Initialization:] Each processor can initialize the local
  constraint sets as a basis of the artificial constraint set $\{z \in
  \mathbb{R}^{d} : -M\1 \leq z \leq M\1 \}$ for some $M \gg 1$. If $M \in
  \mathbb{R}_{> 0}$ is chosen sufficiently large, the artificial constraints
  will be dropped during the evolution of the algorithm. \\
\item[Bounded Communication:] Each processor stores and transmits at most
  $(d+1)d$ numbers at a time. In particular, processors exchange bases of
  \eqref{prob.BasicApproxLP}, which are defined by not more than $d$
  cutting-planes. Each cutting-plane is fully defined by $d+1$ numbers. \\
\item[Bounded Local Computations:] Each processor has to compute locally the
  2-norm solution to a linear program with $d(|\innbrs(i,t)|+1)$ constraints. \\
\item[Asynchronous Communication:] The \CPC algorithm does not require a
  time-synchronization.  Each processor can perform its local computations at
  any speed and update its local state whenever it receives data from some of
  its in-neighbors. \\
\end{LaTeXdescription}

Due to these properties, the \CPC algorithm is particularly well suited for
optimization in large networks of identical processors.

\subsection*{Technical Analysis of the \CPC Algorithm} \label{sec.Analysis}
Before starting the proof of the algorithm correctness, we point out three important
technical properties related to its evolution:  
\begin{itemize}
\item The linear constraints stored by a processor form always a
  \emph{polyhedral outer-approximation} of the globally feasible set $\mc{Z}$. 
\item The cost-function of each processor is monotonically non-increasing over
  the evolution of the algorithm. 
\item If the communication graph $\Gc$ is a strongly connected \emph{static}
  graph, then after $\diam(\Gc)$ communication rounds, all processors in the
  network compute a query-point with a cost not worse than the best processor at
  the initial iteration. 
\end{itemize}
These properties provide an intuition about the functionality of the algorithm
and the line we will follow to prove its correctness.
They are formalized and proven rigorously in Lemma \ref{prop.TechIssues} in
Appendix~\ref{sec.AppendixProofs}. 
We are ready to establish the correctness of the \CPC algorithm. We start by
formalizing two auxiliary results which are also interesting on their own.
The first result states the convergence of the query points to the locally
feasible sets.
\begin{lemma}[Convergence] \label{prop.Convergence} Assume Assumption \ref{ass.Separator} holds.  Let $z^{[i]}(t)$ be the
  query point generated by processor $i$ performing the \CPC algorithm. Then,
  the sequence $\{z^{[i]}(t)\}_{t \geq 0}$ has a limit point in the set
  $\mc{Z}_{i}$, i.e., there exists $\bar{z}^{[i]} \in \mc{Z}_{i}$ such that
  \begin{align*}
    \lim_{t\rightarrow \infty} \; \| z^{[i]}(t) - \bar{z}^{[i]}\|_{2}
    \rightarrow 0. \eqoprocend
  \end{align*}
\end{lemma}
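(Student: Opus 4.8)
The plan is to combine the three technical monotonicity properties listed before the lemma (and formalized in Lemma~\ref{prop.TechIssues}) with the closure/vanishing-separation property of the cutting-plane oracle in Assumption~\ref{ass.Separator}. First I would observe that the query point $z^{[i]}(t)$ is the minimal $2$-norm optimizer of a linear program over a polyhedron $\mc{H}_{tmp}^{[i]}(t)$ that always outer-approximates $\mc{Z}$, and that all these polyhedra live inside the fixed bounded box from the distributed initialization (the artificial box constraints with parameter $M$ are never violated, so the sequence $\{z^{[i]}(t)\}_{t\ge 0}$ stays in a compact set). Hence the sequence has at least one convergent subsequence; call a limit point $\bar z^{[i]}$. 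It remains to show every limit point lies in $\mc{Z}_i$ and that the whole sequence converges.

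The heart of the argument is to show that the separation value $s^{[i]}(t) := a^{T}(z^{[i]}(t))\,z^{[i]}(t) - b(z^{[i]}(t))$ produced when calling $\orc(z^{[i]}(t),\mc{Z}_i)$ tends to $0$. For this I would use the cost monotonicity: the value $\gamma^{[i]}(t) = c^{T}z^{[i]}(t)$ (equivalently the optimal value of the local approximate LP) is monotonically non-increasing in $t$ and bounded below by the optimal value of \eqref{prob.Basic}, hence converges to some limit $\gamma^{[i]}_\infty$. Now fix a convergent subsequence $z^{[i]}(t_k)\to\bar z^{[i]}$. At each such step, the cut $h(z^{[i]}(t_k))$ with normal $a(z^{[i]}(t_k))$ is incorporated into $B_{tmp}^{[i]}(t_k)\cup h(z^{[i]}(t_k))$; by Assumption~\ref{ass.Separator}(i) the norms $\|a(z^{[i]}(t_k))\|_2$ are uniformly bounded, so passing to a further subsequence we may assume $a(z^{[i]}(t_k))\to\bar a$ and $b(z^{[i]}(t_k))\to\bar b$. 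The key quantitative step is to bound how much the presence of this cut forces the cost to drop at a later iteration: at time $t_k+1$ the processor's polyhedron satisfies $\bar a^{T}z \le \bar b$ (up to the vanishing perturbation), yet $z^{[i]}(t_k)$ violated it by $s^{[i]}(t_k)$; since the objective value only moves by $\gamma^{[i]}(t_k)-\gamma^{[i]}(t_k+1)\to 0$ and (using Lemma~\ref{prop.QuadraticObjective}) the query point maximizes a strongly concave functional $J_{\epsilon}$ over the polyhedron, strong concavity converts the vanishing objective gap into a vanishing bound on $\|z^{[i]}(t_k)-z^{[i]}(t_k+1)\|_2$, and hence on $s^{[i]}(t_k)$. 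Therefore $s^{[i]}(t_k)\to 0$, and by Assumption~\ref{ass.Separator}(ii) the limit point $\bar z^{[i]}$ lies in $\mc{Z}_i$.

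To upgrade a limit point to a genuine limit, I would exploit that the query point is always the \emph{minimal $2$-norm} point in the optimizer set $\Gamma_{H_{tmp}^{[i]}(t)}$, which by Lemma~\ref{prop.QuadraticObjective} equals $\arg\max_{z\in\mc{H}_{tmp}^{[i]}(t)} J_{\epsilon}(z)$ for a uniform $\epsilon$. Since the feasible polyhedra are nested (adding cuts only shrinks them) along the relevant sequence and the value $J_{\epsilon}(z^{[i]}(t))$ is monotone and convergent, strong concavity of $J_{\epsilon}$ gives that $z^{[i]}(t)$ is Cauchy: for $t'>t$, $\|z^{[i]}(t')-z^{[i]}(t)\|_2^2 \le \tfrac{2}{\epsilon}\big(J_{\epsilon}(z^{[i]}(t'))-J_{\epsilon}(z^{[i]}(t))\big)$ along a chain where each polyhedron contains the next, and the right-hand side goes to $0$. (One must be a little careful here because processors receive cuts from neighbors, so the polyhedra are not literally monotone at every step; the honest version of this estimate uses that between consecutive updates a processor's own basis defines a polyhedron containing its new query point, together with the monotonicity of $\gamma^{[i]}(t)$.) Hence the whole sequence converges to a single point $\bar z^{[i]}$, which by the previous paragraph lies in $\mc{Z}_i$.

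The main obstacle I anticipate is the second step: rigorously turning "a cut with separation $s^{[i]}(t)$ was added" into "the cost, and hence $s^{[i]}(t)$, must eventually go to zero", because the cut is generated at one processor but may only influence query points several communication rounds later and after being possibly re-bundled into different bases along the network. Handling this cleanly requires the third technical property (propagation of the best cost across the network in $\diam(\Gc)$ rounds, or its time-varying analogue under Assumption~\ref{ass.PeriodicConnectivity}) together with the fact that bases are preserved in the sense that the $2$-norm solution only changes through genuine cuts; bookkeeping this propagation carefully, while keeping the strong-concavity estimate uniform, is where the real work lies.
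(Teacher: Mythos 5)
Your proposal is correct and follows essentially the same route as the paper's proof: monotone non-increase of the strongly concave perturbed objective $J_{\epsilon}$ along processor $i$'s own iterates forces $\|z^{[i]}(t)-z^{[i]}(t+1)\|_2\to 0$, the freshly generated cut is satisfied by the very next query point so $s(z^{[i]}(t)) \le \|a(z^{[i]}(t))\|_2\,\|z^{[i]}(t)-z^{[i]}(t+1)\|_2 \to 0$, and Assumption \ref{ass.Separator} then places the limit point in $\mc{Z}_{i}$. The obstacle you anticipate about cuts being generated at one processor and only influencing others after several communication rounds does not arise for this lemma: only processor $i$'s own oracle cut is needed, and step (S4) incorporates it into $B^{[i]}(t+1)$ immediately, so the network-propagation bookkeeping is only required later, for the Agreement lemma.
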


The second result shows that all processors in the network will reach an
agreement.

\begin{lemma}[Agreement] \label{prop.Agreement} Assume the communication network
  $\Gc(t)$ is jointly strongly connected. Let $z^{[i]}(t)$ be query points
  generated by the \CPC algorithm, then
  \begin{align*}
    \lim_{t \rightarrow \infty} \; \|z^{[i]}(t) -z^{[j]}(t)\|_{2} \rightarrow 0,
    \quad \mbox{for\; all\;} i,j \in \{1,\ldots,n\}.  \eqoprocend
  \end{align*}
\end{lemma}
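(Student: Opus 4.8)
The plan is to combine the three technical properties stated just before Lemma~\ref{prop.Convergence} (and formalized in Lemma~\ref{prop.TechIssues}) with the convergence result of Lemma~\ref{prop.Convergence}. The key observation is that each processor's cost value $\gamma^{[i]}(t) := c^{T}z^{[i]}(t) = \gamma_{H_{tmp}^{[i]}(t)}$ is monotonically non-increasing in $t$ and bounded below by the optimal value of \eqref{prob.Basic} (since every stored polyhedron is an outer approximation of $\mc{Z}$, which is non-empty). Hence each $\gamma^{[i]}(t)$ converges to some limit $\gamma^{[i]}_{\infty}$. First I would argue that all these limits coincide: whenever $j$ is an in-neighbor of $i$ at time $t$, processor $i$ forms $H_{tmp}^{[i]}(t) \supseteq B^{[j]}(t)$, so $\gamma^{[i]}(t+1) \leq \gamma^{[i]}(t) \le \gamma_{B^{[j]}(t)} = \gamma^{[j]}(t)$; using joint strong connectivity (Assumption~\ref{ass.PeriodicConnectivity}), between any two nodes there is, infinitely often, a directed path along which this inequality propagates, which forces $\gamma^{[i]}_{\infty} = \gamma^{[j]}_{\infty} =: \gamma_{\infty}$ for all $i,j$.

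Next I would pin down the limit points. By Lemma~\ref{prop.Convergence}, the sequence $\{z^{[i]}(t)\}$ has a limit point $\bar z^{[i]} \in \mc{Z}_i$ (and, using the third technical property about information propagation together with the fact that $z^{[i]}(t)$ is the minimal-norm solution of an LP whose constraints eventually reflect cutting planes from all sets $\mc{Z}_j$, one in fact gets $\bar z^{[i]} \in \mc{Z} = \bigcap_j \mc{Z}_j$). Along such a subsequence, $c^{T}\bar z^{[i]} = \gamma_{\infty}$, so $\bar z^{[i]}$ is feasible for \eqref{prob.Basic} with cost $\gamma_\infty$; since every approximate program is a relaxation, $\gamma_\infty$ equals the optimal value of \eqref{prob.Basic} and $\bar z^{[i]}$ is a global optimizer. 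The point I would stress is that the limit point is actually the \emph{minimal 2-norm} optimizer $z^{*}_{\mc{Z}}$: each $z^{[i]}(t)$ is the minimal-norm point of $\Gamma_{H_{tmp}^{[i]}(t)}$, and since $\Gamma_{H} \supseteq \Gamma_{\mc{Z}}$ for an outer approximation $\mc{H} \supseteq \mc{Z}$ of equal optimal value, one has $\|z^{[i]}(t)\|_2 \le \|z^{*}_{\mc{Z}}\|_2$; combined with $\bar z^{[i]}$ being an optimizer, lower semicontinuity of the norm gives $\bar z^{[i]} = z^{*}_{\mc{Z}}$, which is independent of $i$ and of the chosen subsequence.

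It then follows that the whole sequence converges: a bounded sequence all of whose limit points equal the single point $z^{*}_{\mc{Z}}$ converges to $z^{*}_{\mc{Z}}$. Boundedness comes from the norm bound $\|z^{[i]}(t)\|_2 \le \|z^{*}_{\mc{Z}}\|_2$ established above. Hence $\lim_{t\to\infty} z^{[i]}(t) = z^{*}_{\mc{Z}}$ for every $i$, and in particular
\begin{align*}
  \lim_{t\to\infty}\|z^{[i]}(t) - z^{[j]}(t)\|_2
  \le \lim_{t\to\infty}\big(\|z^{[i]}(t) - z^{*}_{\mc{Z}}\|_2 + \|z^{[j]}(t) - z^{*}_{\mc{Z}}\|_2\big) = 0
\end{align*}
for all $i,j \in \{1,\dots,n\}$.

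\textbf{Main obstacle.} The delicate step is the propagation argument under merely \emph{jointly} strongly connected, time-varying, directed communication: one must show that a cutting plane (or the associated cost decrease) generated at node $j$ reaches node $i$ after finitely many — though not uniformly bounded — rounds, and that the minimal-norm LP solution is continuous/robust enough under the resulting monotone enlargement of the constraint collections for the limit-point identification to go through. Making the interplay between the monotone cost convergence, the finite-basis bookkeeping of step~(S4), and the asymptotic feasibility from Lemma~\ref{prop.Convergence} fully rigorous is where the real work lies; the rest is bookkeeping with relaxations and lower semicontinuity of $\|\cdot\|_2$.
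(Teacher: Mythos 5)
There is a genuine gap, and it is structural: your argument is circular. You derive agreement by first proving full correctness (every $z^{[i]}(t)\to z^{*}_{\mc{Z}}$), and the pivotal step there is the parenthetical claim that the limit point satisfies $\bar z^{[i]}\in\mc{Z}=\bigcap_j\mc{Z}_j$ rather than merely $\bar z^{[i]}\in\mc{Z}_i$. Lemma~\ref{prop.Convergence} only gives $\bar z^{[i]}\in\mc{Z}_i$, because processor $i$ only ever queries its \emph{own} oracle at its \emph{own} query points; the cutting planes it receives for the other sets $\mc{Z}_j$ were generated at \emph{other processors'} query points and need not separate points near $\bar z^{[i]}$ from $\mc{Z}_j$. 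In the paper, the inclusion $\bar z\in\mc{Z}$ is obtained in Theorem~\ref{thm.AsymptoticConvergence} precisely \emph{by invoking the Agreement lemma} (all limit points coincide, and each lies in its own $\mc{Z}_i$). So the fact you assert in passing is essentially the statement you are trying to prove. A second concrete error is the uniform bound $\|z^{[i]}(t)\|_2\le\|z^{*}_{\mc{Z}}\|_2$: the inclusion $\Gamma_{H}\supseteq\Gamma_{\mc{Z}}$ requires $\gamma_{H}=\gamma_{\mc{Z}}$, but for finite $t$ the relaxation satisfies $\gamma_{H^{[i]}_{tmp}(t)}\ge\gamma_{\mc{Z}}$ with strict inequality in general, in which case $\Gamma_{H^{[i]}_{tmp}(t)}\cap\Gamma_{\mc{Z}}=\emptyset$ and the minimal-norm point of the relaxed optimal face can have arbitrarily large norm. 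Hence neither the boundedness of the sequences nor the identification of the limit point is established.

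The paper's proof avoids both problems by never identifying the limit with the optimizer of \eqref{prob.Basic}. It works with the common strictly concave surrogate $J_{\underline{\epsilon}}(z)=c^{T}z-\tfrac{\underline{\epsilon}}{2}\|z\|_2^2$, which every query point maximizes over its local polyhedron (Lemma~\ref{prop.TechIssues}(iii)) and which is monotonically non-increasing along each processor's sequence (Lemma~\ref{prop.TechIssues}(iv)). Joint strong connectivity propagates upper bounds on $J_{\underline{\epsilon}}$-values between processors, forcing the limits $\bar J^{[i]}$ to coincide by contradiction; strong concavity then converts agreement of the \emph{values} into agreement of the \emph{points} via $|J(z^{[i]}(t))-J(z^{[j]}(t))|>\sigma\|z^{[i]}(t)-z^{[j]}(t)\|_2^2$. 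Your propagation argument for the linear costs $c^{T}z^{[i]}(t)$ is sound and parallels this first half, but with the purely linear objective, equal cost values do not pin down equal points (the approximate LPs are degenerate in general) — that is exactly why the minimal-norm tie-breaking and Lemma~\ref{prop.QuadraticObjective} are introduced. To repair your proof, replace $c^{T}z$ by $J_{\underline{\epsilon}}$ throughout the propagation step and conclude directly from strong concavity, without routing through correctness.
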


The correctness of the algorithm is summarized in the following theorem.
\begin{theorem}[Correctness] \label{thm.AsymptoticConvergence} Let $\Gc(t)$ be a
  jointly strongly connected communication network with processors performing
  the \CPC algorithm, and let Assumption \ref{ass.Separator} hold. Let $z^{*}$ be the unique optimizer to
  \eqref{prob.Basic} with minimal 2-norm, then
  \begin{align*}
    \lim_{t \rightarrow \infty} \|z^{[i]}(t) - z^{*}\|_{2} \rightarrow 0 \quad
    \mbox{for\;all}\; i \in \{1,\ldots,n\}. \eqoprocend
  \end{align*}
\end{theorem}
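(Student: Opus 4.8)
\textbf{Proof plan for Theorem \ref{thm.AsymptoticConvergence}.} The plan is to combine the two auxiliary lemmas (Convergence, Lemma \ref{prop.Convergence}, and Agreement, Lemma \ref{prop.Agreement}) with the monotonicity of the cost along the algorithm's evolution (the second bullet of the technical analysis, formalized in Lemma \ref{prop.TechIssues}) to show that the common limit point is feasible and optimal for \eqref{prob.Basic}, and then use uniqueness of the minimal 2-norm optimizer to upgrade ``some limit point equals $z^*$'' to convergence of the whole sequence.

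First I would argue that the query-point sequences of all processors share a common set of limit points. By Lemma \ref{prop.Agreement}, $\|z^{[i]}(t)-z^{[j]}(t)\|_2 \to 0$ for every pair $i,j$, so along any subsequence on which one processor's query point converges, all processors' query points converge to the same vector. Fix a subsequence $t_k$ along which $z^{[1]}(t_k) \to \bar z$ (such a subsequence exists because the sequences are bounded — the cost values $c^T z^{[i]}(t)$ are non-increasing and bounded below by the optimal value of \eqref{prob.Basic}, and the minimal-norm selection keeps the iterates bounded; this boundedness is part of the technical lemma). Then $z^{[i]}(t_k) \to \bar z$ for all $i$. Next, applying Lemma \ref{prop.Convergence} (which gives that the sequence $\{z^{[i]}(t)\}$ has a limit point in $\mc{Z}_i$, and, via Assumption \ref{ass.Separator}(ii), that the slack $s(z^{[i]}(t)) \to 0$) to this common limit point $\bar z$ shows $\bar z \in \mc{Z}_i$ for every $i$, hence $\bar z \in \mc{Z} = \bigcap_i \mc{Z}_i$, so $\bar z$ is feasible for \eqref{prob.Basic}.

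Second, I would establish optimality of $\bar z$. At every iteration the local constraint set $B^{[i]}(t)$ is a basis of a polyhedron $\mc{H}$ that is an outer approximation of $\mc{Z}$ (first bullet of the technical analysis), so $c^T z^{[i]}(t) = \gamma_{H^{[i]}_{tmp}(t)} \geq \max_{z \in \mc{Z}} c^T z = c^T z^*$ for all $t$. Taking the limit along $t_k$ gives $c^T \bar z \geq c^T z^*$. Combined with feasibility of $\bar z$, this forces $c^T \bar z = c^T z^*$, i.e., $\bar z$ is an optimal solution of \eqref{prob.Basic}. This does not yet identify $\bar z$ with the minimal-norm optimizer $z^*$; to close that gap I would use that $z^{[i]}(t)$ is the \emph{minimal 2-norm} optimizer of its approximate program. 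Since $z^* \in \mc{Z} \subseteq \mc{H}^{[i]}_{tmp}(t)$ and $c^T z^* = \gamma_{H^{[i]}_{tmp}(t)}$ (both equal the global optimum as just shown — actually one needs $\gamma_{H^{[i]}_{tmp}(t)} \to c^T z^*$, which follows since $\gamma$ is non-increasing, bounded below by $c^T z^*$, and equals $c^T z^{[i]}(t) \to c^T \bar z = c^T z^*$), the point $z^*$ is feasible for the projection problem \eqref{prob.LPprojection} in the limit, so $\|z^{[i]}(t_k)\|_2 \leq \|z^*\|_2 + o(1)$; passing to the limit yields $\|\bar z\|_2 \leq \|z^*\|_2$. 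But $z^*$ is the \emph{unique} minimal-norm point among the optimizers of \eqref{prob.Basic}, and $\bar z$ is also an optimizer, hence $\bar z = z^*$.

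Finally, since every convergent subsequence of the bounded sequence $\{z^{[i]}(t)\}$ has the same limit $z^*$ (the argument above applies to any limit point), the full sequence converges: $z^{[i]}(t) \to z^*$ for all $i$, which is the claim. I expect the main obstacle to be the last step of the optimality argument, namely making rigorous that $z^*$ (or a point arbitrarily close to it in $\mc{Z}$) is admissible for the minimal-norm selection of the approximate program uniformly along the subsequence — one must be careful that $z^*$ lies in the optimal face $\Gamma_{H^{[i]}_{tmp}(t)}$ only in the limit, not necessarily at finite $t$, so the comparison $\|z^{[i]}(t_k)\|_2 \le \|z^*\|_2$ should really be stated with an $o(1)$ term coming from the gap $\gamma_{H^{[i]}_{tmp}(t_k)} - c^T z^*$, which is handled by the monotone convergence of $\gamma$. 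Everything else is a routine assembly of the already-proven lemmas, the outer-approximation property, and uniqueness of the minimal-norm optimizer.
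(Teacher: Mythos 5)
Your overall architecture matches the paper's: use Lemma \ref{prop.Agreement} and Lemma \ref{prop.Convergence} to get a common limit point $\bar z\in\mc{Z}$, use the outer-approximation property ($c^{T}z^{[i]}(t)\geq c^{T}z^{*}$, i.e.\ Lemma \ref{prop.TechIssues}(ii)) to conclude $\bar z$ is an optimizer of \eqref{prob.Basic}, and then argue $\bar z=z^{*}$. The first two stages are fine (indeed slightly more careful about subsequences than the paper needs to be, since Lemma \ref{prop.Convergence} already gives convergence of the full sequence). The problem is the last stage.

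Your identification of $\bar z$ with the minimal-norm optimizer rests on the claim $\|z^{[i]}(t_k)\|_2\leq\|z^{*}\|_2+o(1)$, justified by ``$z^{*}$ is feasible for the projection problem in the limit'' because $\gamma_{H^{[i]}_{tmp}(t_k)}\to c^{T}z^{*}$. This does not follow. The minimal-norm selection is taken over the optimal face $\Gamma_{H^{[i]}_{tmp}(t_k)}$, and $z^{*}$ lies in $\mc{H}^{[i]}_{tmp}(t_k)$ but achieves the strictly smaller value $c^{T}z^{*}<\gamma_{H^{[i]}_{tmp}(t_k)}$; convergence of the \emph{values} $\gamma\to c^{T}z^{*}$ gives no control on the \emph{distance} from $z^{*}$ to the optimal face, which can stay bounded away from zero. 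Concretely: take $\mc{Z}$ with a flat optimal face (say the segment $\{1\}\times[-2,2]$ for $c=(1,0)$, so $z^{*}=(1,0)$) and outer polyhedra whose optimal faces are single vertices converging to $(1,2)$; then $\gamma\downarrow c^{T}z^{*}$, the query points converge to the feasible optimizer $(1,2)$, yet $\|(1,2)\|>\|z^{*}\|$. Nothing in your argument excludes this. What excludes it in the actual algorithm is precisely the perturbed-objective machinery you never invoke: by Lemma \ref{prop.QuadraticObjective} and Lemma \ref{prop.TechIssues}(iii) there is a \emph{uniform} $\underline{\epsilon}>0$ such that every query point maximizes $\Jeps(z)=c^{T}z-\tfrac{\epsilon}{2}\|z\|_2^{2}$ over its (outer-approximating) polyhedron, hence $\Jeps(z^{[i]}(t))\geq\Jeps(z^{*})$ for all $t$; passing to the limit gives $\Jeps(\bar z)\geq\Jeps(z^{*})$ with $\bar z\in\mc{Z}$, and since $z^{*}$ is the unique maximizer of $\Jeps$ over $\mc{Z}$, $\bar z=z^{*}$. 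This is the paper's closing step, and it is the step that couples the objective value and the norm in the way your $o(1)$ bookkeeping cannot. To repair your proof, replace the norm-comparison argument by this $\Jeps$ argument (or supply an independent proof that the minimal-norm points of the optimal faces converge to the minimal-norm point of the limiting optimal face, which is false in general and so would have to use algorithm-specific structure).
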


For the clarity of presentation, the technical proofs of Lemma
\ref{prop.Convergence}, Lemma \ref{prop.Agreement}, and Theorem
\ref{thm.AsymptoticConvergence} are presented in Appendix
\ref{sec.AppendixProofs}.

A major advantage for using the \CPC algorithm in distributed systems is its
inherent fault-tolerance. The requirements on the communication network are very
weak and the algorithm can well handle disturbances in the communication like,
e.g., packet-losses or delays.
Additionally, the algorithm has an inherent tolerance against processor
failures. We say that a processor fails if it stops at some time $t_{f}$ to
communicate with other processors.

\begin{theorem}[Fault-Tolerance]
  Suppose that processor $l$ fails at time $t_{f}$, and that the communication
  network remains jointly strongly connected after the failure of processor
  $l$. Let $z^{[l]}(t_{f})$ be the last query point computed by processor $l$
  and define $\gamma^{[l]}(t_{f})=c^{T}z^{[l]}(t_{f})$.  Then the query-points
  computed by all processors converge, i.e., $ \lim_{t \rightarrow \infty}
  \|z^{[i]}(t) - \bar{z}_{-l}\| \rightarrow 0,$ with $\bar{z}_{-l}$ satisfying
$$\bar{z}_{-l} \in \left( \bigcap_{i \neq l} \mc{Z}_{i} \right) \quad 
\mbox{and} \quad c^{T}\bar{z}_{-l} \leq \gamma^{[l]}(t_{f}). $$
\end{theorem}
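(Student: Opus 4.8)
The plan is to reduce the fault-tolerance statement to the already-established correctness theorem (Theorem~\ref{thm.AsymptoticConvergence}) applied to a modified network and modified problem data. After processor $l$ fails at time $t_f$, it never communicates again, so for all $t \geq t_f$ the effective communication graph is $\Gc(t)$ with node $l$ (and all its edges) removed; call this $\Gc_{-l}(t)$. By hypothesis $\Gc_{-l}(t)$ is still jointly strongly connected. The remaining processors $i \neq l$ continue to run the \CPC algorithm unchanged, so from time $t_f$ onward we have a valid execution of the algorithm on the network $\Gc_{-l}(t)$ over the node set $V \setminus \{l\}$.

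The subtlety is that the cutting-planes originally contributed by processor $l$ do not simply vanish: at time $t_f$ some of the surviving processors may hold, in their bases $B^{[i]}(t_f)$, cutting-planes that were generated by $l$ as separators for $\mc{Z}_l$. These planes remain valid outer-approximating half-spaces for $\mc{Z}_l$, hence for $\bigcap_{i\neq l}\mc{Z}_i$ as well only in the trivial sense of being valid for $\mc{Z}_l$; they are extra constraints that the surviving network carries along but never regenerates. The clean way to handle this is to define an augmented constraint set for one of the surviving processors, or better, to treat the finite collection $\Hc_l \subseteq$ (the union of all cutting-planes present in the network at time $t_f$ that were generated against $\mc{Z}_l$) as a fixed polyhedron $\mc{P}_l \supseteq \mc{Z}_l$. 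Then the post-failure execution is exactly an execution of \CPC on the network $\Gc_{-l}$ for the problem $\max\{c^Tz : z \in \mc{P}_l \cap \bigcap_{i\neq l}\mc{Z}_i\}$, where $\mc{P}_l$ is injected through the initial bases rather than through an oracle. Since $\mc{P}_l$ is polyhedral (finitely many half-spaces, each trivially admitting a cutting-plane oracle satisfying Assumption~\ref{ass.Separator}), all hypotheses of Theorem~\ref{thm.AsymptoticConvergence} are met, and therefore all surviving query points converge to the unique minimal-2-norm optimizer $\bar z_{-l}$ of this modified problem. In particular $\bar z_{-l} \in \mc{P}_l \cap \bigcap_{i\neq l}\mc{Z}_i \subseteq \bigcap_{i\neq l}\mc{Z}_i$, which is the first claimed property.

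For the cost bound $c^T\bar z_{-l} \leq \gamma^{[l]}(t_f)$, I would use the monotonicity property from Lemma~\ref{prop.TechIssues}: each processor's cost $c^Tz^{[i]}(t)$ is non-increasing along the evolution, and the shared/merged bases only tighten the approximation. At time $t_f$ processor $l$ held basis $B^{[l]}(t_f)$ with $\gamma^{[l]}(t_f) = c^Tz^{[l]}(t_f) = \gamma_{\mc{B}^{[l]}(t_f)}$. Just before failing, processor $l$ transmits $B^{[l]}(t_f)$ to its out-neighbors (or it was transmitted at the last successful communication); those cutting-planes enter the network and, by the outer-approximation and monotonicity properties, every subsequent approximate program solved by any surviving processor has feasible region contained in $\mc{B}^{[l]}(t_f)$, hence optimal value at most $\gamma^{[l]}(t_f)$. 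Passing to the limit gives $c^T\bar z_{-l} \leq \gamma^{[l]}(t_f)$. The one point requiring care here is ensuring that $B^{[l]}(t_f)$ (or a refinement of it, via some surviving processor that received it) actually is propagated to the whole surviving network; this follows from joint strong connectivity of $\Gc_{-l}$ together with the fact that once a valid outer-approximating cutting-plane enters a processor's collection its effect is never lost --- dropped inactive constraints cannot increase the optimal value below what the still-active refinement delivers.

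The main obstacle I anticipate is the bookkeeping in the previous paragraph: making rigorous that "processor $l$'s last basis influences the whole network" despite the constraint-dropping step (S4). I would address this by working not with the literal stored bases but with the running infimum $\gamma^{[i]}(t) := \min_{\tau \le t} c^Tz^{[i]}(\tau)$ and showing, exactly as in the proof of Lemma~\ref{prop.TechIssues}, that after the information from $l$ reaches a node the value $\gamma^{[i]}(t)$ is permanently capped by $\gamma^{[l]}(t_f)$; then diameter-type spreading over the jointly connected $\Gc_{-l}$ propagates the cap everywhere. The convergence half is essentially a corollary of the main theorem applied to the reduced problem, so that half is routine once the reduction is set up correctly.
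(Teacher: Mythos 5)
Your proposal is correct and follows essentially the same route as the paper: the paper likewise treats the post-failure evolution as a valid run of the algorithm on the surviving network, invokes Lemma~\ref{prop.Convergence} and Lemma~\ref{prop.Agreement} to place the common limit in $\bigcap_{i\neq l}\mc{Z}_{i}$, and obtains the cost bound from the fact that $l$'s out-neighbors hold $B^{[l]}(t_f)$ together with monotonicity of the approximate optimal value. One small caution: your stronger claim that $\bar z_{-l}\in\mc{P}_l$ is not actually guaranteed (cutting-planes generated against $\mc{Z}_l$ may be dropped as inactive in step (S4) and are never regenerated), but this is harmless since only membership in $\bigcap_{i\neq l}\mc{Z}_i$ and the cost cap are asserted, and the cap survives constraint dropping because the basis computation preserves the optimal value.
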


\begin{proof}
  Consider the evolution of the algorithm starting at time $t_{f}$.
  With Lemma \ref{prop.Convergence} and Lemma \ref{prop.Agreement} one can
  conclude that for all processors $i \neq l$, the query points will converge to
  the set $\left( \bigcap_{i \neq l} \mc{Z}_{i} \right)$.
  Additionally, the out-neighbors of the failing processor $l$ have received a
  basis $B^{[l]}(t_{f})$ such that the optimal value of the linear approximate
  program \eqref{prob.BasicApproxLP} is $\gamma^{[l]}(t_{f})$. Any query point
  $z^{[i]}(t), t \geq t_{f}$, subsequently computed by the out-neighbors of
  processor $l$ as the solution to \eqref{prob.BasicApproxLP} must therefore be
  such that $c^{T}z^{[i]}(t) \leq \gamma^{[l]}(t_{f})$ for all $t \geq t_{f}$.
\end{proof}
This last result provides directly a paradigm for the design of fault-tolerant
systems.
\begin{corollary}
  Suppose that for all $l \in V$, $ \bigcap_{i=1,i \neq l}^{n} \mc{Z}_{i} =
  \mc{Z}$. Then for all $l \in V$, $\bar{z}_{-l} = z^{*}$ with $z^{*}$ the
  optimal solution to \eqref{prob.Basic}. \oprocend
\end{corollary}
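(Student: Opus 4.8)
The plan is to invoke the Fault-Tolerance Theorem together with the Correctness Theorem (Theorem~\ref{thm.AsymptoticConvergence}) and compare the two limiting behaviours of the algorithm. First I would observe that the hypothesis $\bigcap_{i=1,i\neq l}^{n}\mc{Z}_{i}=\mc{Z}$ means that removing processor $l$ does not shrink the feasible set of problem~\eqref{prob.Basic}; in particular the problem restricted to the processors $V\setminus\{l\}$ has exactly the same feasible set $\mc{Z}$ and the same linear cost $c^{T}z$, hence exactly the same unique minimal $2$-norm optimizer $z^{*}$.

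Next I would apply the Fault-Tolerance Theorem to processor $l$: assuming the communication network remains jointly strongly connected after the failure of $l$, the query points of all surviving processors converge to a point $\bar{z}_{-l}$ with $\bar{z}_{-l}\in\bigcap_{i\neq l}\mc{Z}_{i}=\mc{Z}$ and $c^{T}\bar{z}_{-l}\le\gamma^{[l]}(t_{f})$. The key step is then to argue that $\bar{z}_{-l}$ must in fact be the optimizer $z^{*}$. For this I would note that after time $t_{f}$ the surviving processors are simply running the \CPC algorithm on the network $V\setminus\{l\}$ (the basis $B^{[l]}(t_{f})$ initially held by $l$'s former out-neighbours is just a valid outer-approximating collection of cutting-planes, so it does not violate the invariant that the stored constraints are an outer approximation of $\bigcap_{i\neq l}\mc{Z}_{i}=\mc{Z}$). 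Therefore the Correctness Theorem applies to this reduced network and forces the common limit to be the unique minimal $2$-norm optimizer of $\max\{c^{T}z: z\in\mc{Z}\}$, which is $z^{*}$. Hence $\bar{z}_{-l}=z^{*}$, and since $l\in V$ was arbitrary the claim follows for all $l$.

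The main obstacle, and the point that needs to be stated carefully, is justifying that the residual presence of $l$'s old basis among its former out-neighbours cannot cause convergence to a suboptimal point — i.e., that the inequality $c^{T}\bar{z}_{-l}\le\gamma^{[l]}(t_{f})$ from the Fault-Tolerance Theorem is consistent with, and is superseded by, the conclusion $\bar z_{-l}=z^*$. This is resolved by the outer-approximation property: every cutting-plane ever stored, including those in $B^{[l]}(t_{f})$, is a valid cut for some $\mc{Z}_{j}$ with $j\neq l$ (processor $l$'s own cuts for $\mc{Z}_{l}$ are, under the hypothesis, redundant since $\mc{Z}_{l}\supseteq\mc{Z}=\bigcap_{i\neq l}\mc{Z}_{i}$), so the approximate programs solved by the survivors remain relaxations of $\max\{c^{T}z:z\in\mc{Z}\}$ and the general convergence analysis carries over verbatim. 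Once this is noted, the rest is immediate from the two theorems already proved.
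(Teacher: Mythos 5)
Your proposal is correct and follows exactly the argument the paper intends (the corollary is stated without proof as an immediate consequence of the Fault-Tolerance Theorem): under the hypothesis $\bigcap_{i\neq l}\mc{Z}_i=\mc{Z}$, the surviving processors' cutting-plane collections — including any residual cuts inherited from $\mc{Z}_l\supseteq\mc{Z}$ — remain outer approximations of $\mc{Z}$, so the convergence, agreement, and optimality arguments of Theorem~\ref{thm.AsymptoticConvergence} apply unchanged to the reduced network and force $\bar z_{-l}=z^*$. Your explicit handling of why the inherited basis $B^{[l]}(t_f)$ cannot cause convergence to a suboptimal point is precisely the detail the paper leaves implicit, and it is resolved correctly.
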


The abstract problem formulation \eqref{prob.Basic} and the \CPC algorithm
provide a \emph{general framework for distributed convex optimization}.
We show in the following that a variety of important representations of the
constraint sets are covered by this set-up.
Depending on the formulation of the local constraint sets $\mc{Z}_{i}$ different
cutting-plane oracles must be defined, leading to different realizations of the
algorithm.
We specify in the following the \CPC algorithm to three important problem
classes.
We want to stress that the correctness proofs established here for the general set-up
will hold directly for the three specific problem formulations discussed in the
remainder of the paper.

\section{Convex Optimization with Distributed Inequality
  Constraints} \label{sec.Inequality}

As first concrete setup, we consider the most natural realization of the general problem
formulation \eqref{prob.Basic} with the local constraint set defined by a
convex inequality, i.e.,
\begin{align} \label{prob.Inequality} \mc{Z}_{i} = \{z : f_{i}(z) \leq 0\}.
\end{align}
The functions $f_{i}: \mathbb{R}^{d} \mapsto \mathbb{R}$ are assumed to be
convex, but not necessarily differentiable.
Thus, the set-up \eqref{prob.Inequality} includes also the case in which
processor $i$ is assigned more that one constraint, say $\mc{Z}_{i} = \{z :
f_{i1}(z) \leq 0, f_{i2}(z) \leq 0, \ldots, f_{ik}(z) \leq 0 \}$. In fact, one
can define $f_{i}(z) := \max_{j \in \{1,\ldots k\}} f_{ij}(z)$ and directly
obtain the formulation \eqref{prob.Inequality}.

To define a cutting-plane oracle for constraints of the form
\eqref{prob.Inequality}, we use the concept of subdifferential. Given a
query-point $z_{q} \in \mathbb{R}^{d}$, the subdifferential of $f_{i}$ at
$z_{q}$ is
\begin{align*}
  \partial f_{i}(z_{q}) = \{g_{i} \in \mathbb{R}^{d} : f_{i}(z) - f_{i}(z_{q})
  \geq g_{i}^{T}(z - z_{q}), \; \forall z \in \mathbb{R}^{d} \}.
\end{align*}
An element $g_{i} \in \partial f_{i}(z_{q})$ is called a subgradient of $f_{i}$
at $z_{q}$. If the function $f_{i}$ is differentiable, then its gradient $\nabla
f_{i}(z_{q})$ is a subgradient.
A cutting-plane oracle for constraints of the form \eqref{prob.Inequality} is
now as follows, see, e.g., \cite{Kelley1960}.
\begin{quote} \textbf{Cutting-plane Oracle:} If a query point $z_{q}$ is such
  that $f_{i}(z_q) > 0$, then
  \begin{align}
    f_{i}(z_{q}) + g_{i}^{T}(z - z_{q}) \leq
    0, \label{eqn.CuttingPlaneSubgradient}
  \end{align}
  for some $g_{i} \in \partial f_{i}(z)$, is returned, .
\end{quote}
Note also that Assumption \ref{ass.Separator} is satisfied, since $s(z_{q}) =
f_{i}(z_{q}) + g_{i}^{T}(z_{q} - z_{q}) = f(z_{q})$, and $f(z_{q}) = 0$ implies
$z_{q} \in \mc{Z}_{i}$.
If $f_{i}(z) := \max_{j \in \{1,\ldots k\}} f_{ij}(z)$, then $\partial
f_{i}(z_{q}) = \mathbf{Co} \cup \{\partial f_{ij}(z_{q}) : f_{ij}(z_{q}) =
f_{i}(z_{q}) \}$, where $\mathbf{Co}$ denotes the convex hull. Thus, the method
is applicable for constraints where subgradients can be obtained.

\begin{remark}
  An important class of constraints are \emph{semi-definite} constraints of the
  form $ \mc{Z}_{i} = \{ z : F_{i}(z):=F_{i0} + z_{1}F_{i1} + \cdots +
  z_{d}F_{id} \leq 0 \}, $ where $F_{ij}$ are real symmetric matrices, and
  $'\leq 0'$ denotes negative semi-definite.
  The semi-definite constraint can be formulated as inequality constraint
  \begin{align*}
    f_{i}(z) := \lambda_{\max}(F_{i}(z)) \leq 0,
  \end{align*}
  with $\lambda_{\max}$ the largest eigenvalue of $F(z)$.
  It is discussed, e.g., in \cite{Scherer}, that given a query point $z_{q}$ and
  a normalized eigenvector $v_{q}^{*}$ of $F_{i}(z_{q})$ corresponding to
  $\lambda_{\max}(F_{i}(z_{q}))$, then the vector $g_{i} =
  [v_{q}^{*T}F_{1}v_{q}^{*}, \ldots, v_{q}^{*T}F_{d}v_{q}^{*}]^{T}$ is a
  subgradient of $f_{i}(z)$.
  The \CPC algorithm can thus handle semi-definite constraints and has to be
  seen in the context of the recent work on cutting-plane methods for
  semi-definite programming \cite{Krishnan2006}, \cite{Konno2003}. \oprocend
\end{remark}

The \CPC algorithm is directly applicable to problems where processors are
assigned convex, possibly non-differentiable, inequality constraints. Such
distributed problems appear in various important application. For example, the
distributed position estimation problem in wireless sensor networks can be
formulated in the form \eqref{prob.Basic} with convex inequality and
semi-definite constraints available only locally to (some of) the sensor nodes.

\subsection*{Application Example: Convex Position Estimation in Wireless Sensor
  Networks}

Wide-area networks of cheap sensors with wireless communication are envisioned
to be key elements of modern infrastructure systems. In most applications, only
few sensors are equipped with localization tools, and it is necessary to
estimate the position of the other sensors, see \cite{Bachrach2005}.

In \cite{Doherty2001} the sensor localization problem is formulated as a convex
optimization problem, which is then solved by a central unit using semidefinite
programming.
The semi-definite formulation proposed in \cite{Doherty2001} has been later
extended in the literature.
We formulate the distributed position estimation problem given in
\cite{Doherty2001} in the general distributed convex optimization framework
\eqref{prob.Basic} and show that the general \CPC algorithm can be used for a
fully distributed solution, using only local message passing between the
sensors.

Let in the following $\mathbf{v}_{i} \in \mathbb{R}^{2}$ denote the known
position of sensor $i \in \{1,\ldots,n\}$. We want to estimate the unknown
position of an additional sensor $z \in \mathbb{R}^{2}$.
In \cite{Doherty2001}, two different estimation mechanisms are considered: (i)
laser transmitters at nodes which scan through some angle, leading to a cone
set, which can be expressed by three linear constraints of the form $f(z) :=
a_{i}^{T}z -b_{i} \leq 0,$ $a_{i} \in \mathbb{R}^{2\times 1}$ and $b_{i} \in
\mathbb{R}$, two bounding the angle and one bounding the distance and (ii) the
range of the RF transmitter, leading to circular constraints of the form $ \|z -
\mathbf{v}_{i}\|_{2}^{2} \leq r_{i}^{2}$. Using the Schur-complement, the
quadratic constraint can be formulated as a semi-definite constraint of the form
$$ F_{i}(z) := (-1)\begin{bmatrix}r_{i}I_{2} & (z - \mathbf{v}_{i}) \\ (z-\mathbf{v}_{i})^{T} & r_{i} \end{bmatrix} \leq 0,$$
where $I_ {2}$ is the $2 \times 2$ identity matrix.
\begin{figure}[h!]
  \begin{center}
     \includegraphics[width = 0.45\textwidth, trim=4cm 13cm 5cm 4cm, ]{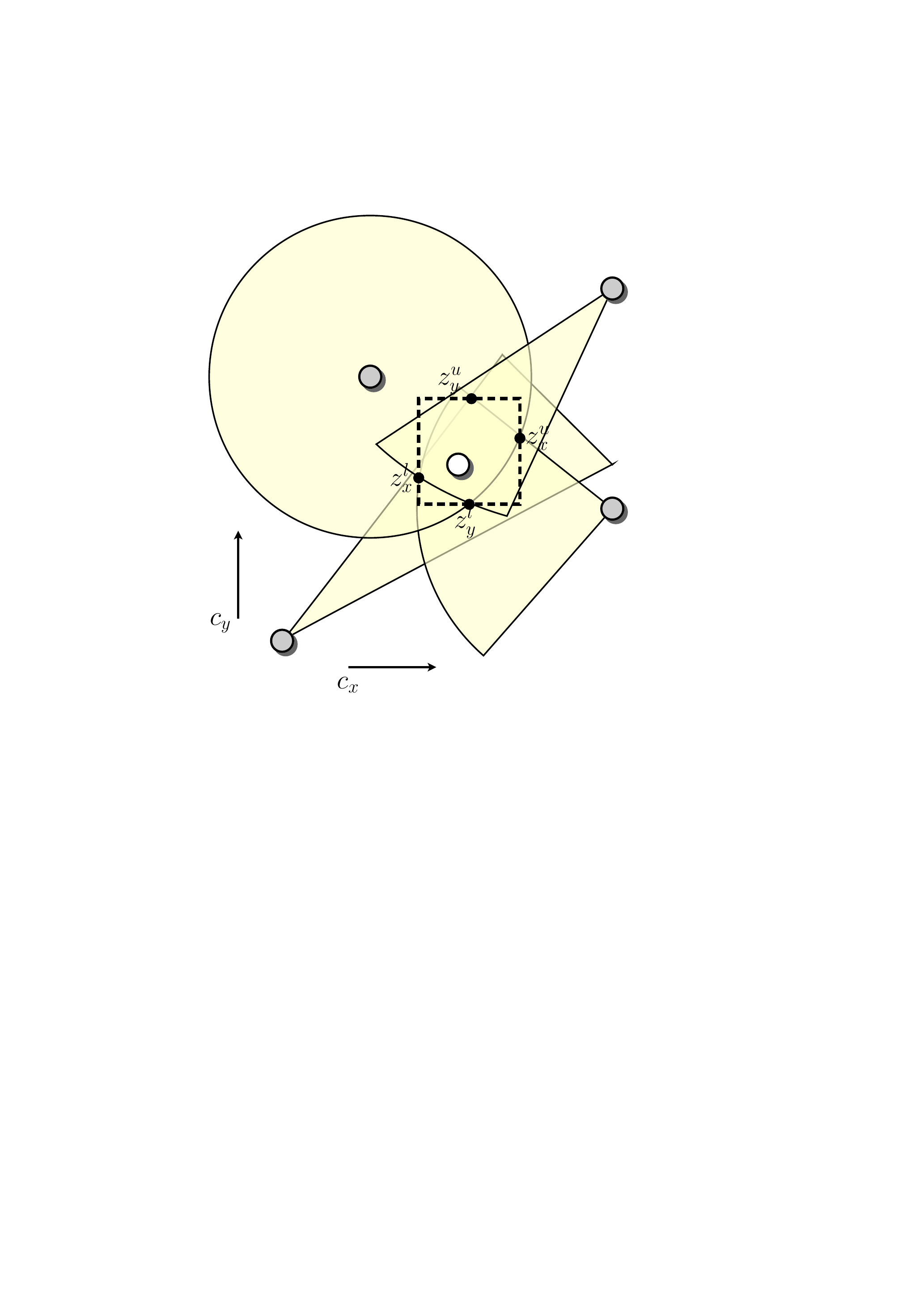} 
  \end{center}
  \caption{Localization of the white node by set estimates of the four gray
    nodes. The set estimate is given by the bounding box which determined by the
    four point $z_{x}^{l}, z_{x}^{u}, z_{y}^{l}, z_{y}^{u}$. The four extreme
    points can be found with the \CPC algorithm.} \label{fig.localization}
\end{figure}
Each sensor $i$ can bound the position of the unknown sensor to be contained in
the convex set $\mc{Z}_{i}$, which is, depending on the available sensing
mechanism, a disk represented by a semi-definite constraint $\mc{Z}_{i} = \{z :
F_{i}(z) \leq 0\}$, a cone $\mc{Z}_{i} = \{z : f_{ij}(z) \leq 0, j=1,2,3 \}$,
or a quadrant, $\mc{Z}_{i} = \{z : F_{ij}(z) \leq 0, f_{ij} \leq 0, j=1,2,3
\}$.

The sensing nodes can now compute the smallest box $\{ z \in \mathbb{R}^{2} :
[z^{l}_{x}, z^{l}_{x}]^{T} \leq z \leq [z^{u}_{x}, z^{u}_{y}]^{T}\}$ that is
guaranteed to contain the unknown position using the \CPC algorithm. As proposed
in \cite{Doherty2001}, the minimal bounding box can be computed by solving four
optimization problems with linear objectives.
To compute, for example, $z^{u}_{x}$ one defines the objective $c_{x} = [1,
0]^{T}$ and solves $ z^{u}_{x} := \max\; c_{x}^{T}z, \; \mbox{s.t.} \; z \in
\bigcap_{i=1}^{n}\mathcal{Z}_{i}. $ In the same way $z^{l}_{x}, z^{l}_{y},
z^{u}_{y}$ can be determined. Figure \ref{fig.localization} illustrates a
configuration where four nodes estimate the position of one node. A linear
version of such a distributed estimation problem, i.e., with all constraints
being linear inequalities, has been considered in the previous work
\cite{Notarstefano2009}.

\section{Robust Optimization with Uncertain Constraints} \label{sec.Robustness}

The general formulation \eqref{prob.Basic} covers also distributed robust
optimization problems with uncertain constraints. The \CPC algorithm can
therefore be used to solve a class of \emph{robust optimization problems in
  peer-to-peer processor networks}.

In particular, we consider constraint sets with parametric uncertainties of
the form
\begin{align} \label{prob.RobustOptimization} \mc{Z}_{i} = \{ z : \;
  f_{i}(z,\theta) \leq 0,\; \mbox{for\; all\;} \theta \in \Omega_{i} \}
\end{align}
where $\theta_{i}$ is an uncertain parameter, taking values in the compact
convex set $\Omega_{i}$.  We assume that $f_{i}$ is convex in $z$ for any fixed
$\theta$.  If additionally $f_{i}$ is concave in $\theta$ and $\Omega_{i}$ is a
convex set, we say that the resulting optimization problem \eqref{prob.Basic} is
convex \cite{Lopez2007}. As we will see later on, the first condition is
cruicial for the application of the algorithm. The second condition will ensure
that the problem can be solved exactly by our algorithm.

The problem \eqref{prob.Basic} with constraints of the form
\eqref{prob.RobustOptimization} is a \emph{distributed deterministic robust}
\cite{BenTal2009} or distributed \emph{semi-infinite} optimization problem
\cite{Lopez2007}. Each processor has knowledge of an infinite number of
constraints, determined by the parameter $\theta$ and the uncertainty set
$\Omega_{i}$. Obviously, uncertain constraints as
\eqref{prob.RobustOptimization} appear frequently in distributed decision
problems.
Here we focus on a deterministic worst-case optimization problem, where
a solution that is feasible for any possible representation of the
uncertainty is sought.
A comprehensive theory for robust optimization in centralized systems has been developed and is presented, e.g., in \cite{BenTal2009}.

Nowadays, mainly two different approaches are pursued in robust
optimization. In one research direction infinite, uncertain constraints are replaced by a
finite number of ``sampled" constraints. Sampling methods select a finite number
of parameter values and provide bounds for the expected violation of the
uncertain constraints \cite{Calafiore2010}. In a distributed setup, a sampling approach has been explored in \cite{Carlone2012a}.
The other research direction aims at formulating robust counterparts of
the uncertain constraints \eqref{prob.RobustOptimization}, leading often to
nominal semi-definite problems (see, e.g., \cite{BenTal2009}). Handling the uncertain constraint
from a semi-infinite optimization point of view \eqref{prob.RobustOptimization},
allows also to apply exchange methods \cite{Reemtsen1994}, where the sampling
point is chosen as the solution of a finite approximation of the optimization
problem. Recently, cutting-plane methods have been considered in the context of
centralized robust optimization \cite{Mutapcic2009}.
Robust optimization in processor networks is a relatively new problem. Robust
optimization for communication networks using dual decomposition is considered
in \cite{Yang2008}. We connect the robust optimization problem with uncertain
constraints \eqref{prob.RobustOptimization} to our general distributed
optimization framework, and show that the \CPC algorithm can solve the problem
in processor networks. In fact, the novel \CPC algorithm is related to the
exchange and cutting-plane methods \cite{Reemtsen1994}, \cite{Mutapcic2009}.
We define the cutting-plane oracle for the distributed robust optimization
problem \eqref{prob.RobustOptimization} as follows.

\begin{quote}
  \textbf{Pessimizing Cutting-Plane Oracle:} Given a query point $z_{q}$, the
  worst-case parameter value $\theta_{q}^{*}$ is the maximizer of the
  optimization problem
  \begin{align} \label{prob.PessimizingStep} \max_{\theta} &\;
    f_{i}(z_{q},\theta) \; \quad \mathrm{s.t.}\; \theta \in \Omega_{i}.
  \end{align}
  The query point $z_{q}$ is contained in $\mc{Z}_{i}$ if and only if the value
  of \eqref{prob.PessimizingStep} is smaller or equal to zero. If $z_{q} \notin
  \mc{Z}_{i}$, then cutting-plane is generated as
  \begin{align} \label{eqn.RobustCuttingPlane} f_{i}(z_{q},\theta_{q}^{*}) +
    g_{i}^{T}(z - z_{q}) \leq 0
  \end{align}
  where $g_{i}^{T} \in \partial f_{i}(z_{q},\theta_{q}^{*})$ is a subgradient of
  $f_{i}$.
\end{quote}
To see that \eqref{eqn.RobustCuttingPlane} is a cutting-plane, note that a query
point $z_{q} \notin \mc{Z}_{i}$ is cut off, since $f_{i}(z_{q},\theta_{q}^{*}) +
g_{i}^{T}(z_{q} - z_{q}) = f_{i}(z_{q},\theta_{q}^{*}) > 0$. Additionally, for
any point $z \in \mc{Z}_{i}$, we have $0 \geq f_{i}(z_{i},\theta)$ for all
$\theta \in \Omega_{i}$, and in particular $0 \geq f_{i}(z_{i},\theta_{q}^{*})
\geq f_{i}(z_{q},\theta_{q}^{*}) + g_{i}^{T}(z - z_{q})$.  Note that Assumption
\ref{ass.Separator} is satisfied since $f_{i}(z_{q},\theta_{q}^{*})=0$ implies
that $z_{q} \in \mc{Z}_{i}$.

The oracle of the robust optimization problem requires to solve an additional
optimization problem for determining the worst case parameter
\eqref{prob.PessimizingStep}. Following \cite{Mutapcic2009}, we call this the
\emph{pessimizing step}. For the practical applicability of our algorithm it is
important to stress that the pessimizing steps are performed in parallel on
different processors.

The pessimizing step can in general be performed by numerical tools. It can be
solved exactly if the problem is convex, i.e., $f_{i}$ is concave in the
uncertain parameter.

However, even if the convexity condition is not satisfied it might still be
possible to find an exact solution. Reference \cite{Mutapcic2009} provides a
formal discussion about when the pessimizing step can be solved exactly or even
analytically. We review here parts of the discussion.
Assume, e.g., that $f_{i}$ is convex in $\theta_{i}$ for all $z$, and
$\Omega_{i}$ is a bounded polyhedron, with the extreme points
$\{\theta_{i}^{1},\ldots,\theta_{i}^{k}\}$. The maximum of $f_{i}(z,u)$ is then
the maximum of $f_{i}(z,\theta_{i}^{1}),\ldots, f_{i}(z,\theta_{i}^{k})$, and
\eqref{prob.PessimizingStep} can be solved exactly by evaluating and comparing a
finite number of functions.
Furthermore, if $f_{i}(z,\theta_{i})$ is an affine function in $\theta_{i}$,
i.e., $f_{i}(z,\theta_{i}) = \alpha_{i}(z)\theta_{i} + \beta_{i}(z)$ and the
uncertainty set is an ellipsoid, i.e., $\Omega_{i}= \{\theta : \theta =
\bar{\theta}_{i} + P_{i}u, \; \|u\|_{2} \leq 1 \}$ for some nominal parameter
value $\bar{\theta}_{i}$ and a positive definite matrix $P_{i}$, then the
worst-case parameter value can be computed analytically as
\begin{align} \label{eqn.EllipsoidalWorstCase} \theta_{i}^{*} = \bar{\theta}_{i}
  + \frac{P_{i}P_{i}^{T}\alpha_{i}(z)}{\| P_{i} \alpha_{i}(z)\|_{2}}.
\end{align}
Finally, if $f_{i}$ is affine in the uncertain parameter and the uncertainty set
is a polyhedron, the pessimizing step \eqref{prob.PessimizingStep} becomes a
linear program.

\subsection*{Computational Study: Robust Linear Programming}

We evaluate in the following the time complexity of the algorithm in a
computational study for distributed robust linear programming. We follow here
\cite{Ben1999} and consider robust linear programs in the form
\eqref{prob.RobustOptimization} with linear uncertain constraints
\begin{align} \label{prob.RobustLP} a_{i}^{T}z \leq b_{i},\quad a_{i} \in
  \mc{A}_{i}, \quad i \in \{1,\ldots,n\}.
\end{align}
The data of the constraints is only known to be contained in a set, i.e., $a_{i}
\in \mc{A}_{i}$. Although our algorithm can in principle handle any convex
uncertainty set $\mc{A}_{i}$, we restrict us for this computational study to the
important class of \emph{ellipsoidal uncertainties} $\mc{A}_{i} = \{a_{i} :
a_{i} = \bar{a}_{i} + P_{i}u_{i}, \|u_{i}\|_{2} \leq 1 \}.$ The uncertainty
ellipsoids are centered at the points $\bar{a}_{i}$ and their shapes are
determined by the matrices $P_{i} \in \mathbb{R}^{d\times d}$.  It is known in
the literature that the centralized problem can be solved as a nonlinear
\emph{conic quadratic program} \cite{Ben1999}
\begin{align} \label{prob.RobustLP_SOCP}
  \begin{split}
    \max &\; c^{T}z, \quad \mbox{s.t.} \quad \bar{a}_{i}^{T}z + \|P_{i} z \|_{2}
    \leq b_{i}, \quad i \in \{1,\ldots,n\}.
  \end{split}
\end{align}
We will apply our algorithm directly to the uncertain problem model and use the
nonlinear problem formulation \eqref{prob.RobustLP_SOCP} only as a reference for
the computational study.
For the particular problem \eqref{prob.RobustLP} the pessimizing step can be
performed analytically. Note that $ \sup_{a_{i} \in \mc{A}_{i}} a_{i}^{T}z_{q} =
\bar{a}_{i}^{T}z_{q} + \sup_{\|u\|_{2} \leq 1} \{u^{T}P_{i}^{T}z_{q}\} =
\bar{a}_{i}^{T}z_{q} + \|P_{i}^{T}z_{q}\|_{2}.  $ The worst-case parameter is
therefore given by
\begin{align}
  a^{*}_{i} = \bar{a}_{i} + \frac{ P_{i}P_{i}^{T} z_{q} }{\|P_{i}z_{q}\|_{2}}.
\end{align}
A cutting-plane defined according to \eqref{eqn.RobustCuttingPlane} takes simply
the form $ a_{i}^{*}z \leq b_{i}, $ i.e., the linear constraint with the worst
case parameter value.

For the computational study, we generate random linear programs in the following
way.
The nominal problem data $a_{i} \in \mathbb{R}^{d}$ and $c \in \mathbb{R}^{d}$
are independently drawn from a Gaussian distribution with mean $0$ and standard
deviation $10$. The coefficients of the vector $b$ are then computed as $b_{i} =
\left( a_{i}^{T}a_{i} \right)^{1/2}$.  This random linear program model has been
originally proposed in \cite{Dunham1977}.
The matrices $P_{i}$ are generated as $P_{i}=M_{i}^{T}M_{i}$ with the
coefficients of $M_{i} \in \mathbb{R}^{d \times d}$ chosen randomly according to
a normal distribution with mean $0$ and standard deviation $1$.
All simulations are done with dimension $d = 10$. We consider the number of
communication rounds required until the query points of all processors are close
to the optimal solution $z^{*}$, i.e., we stop the algorithm centrally if for
all $i \in V$, $\|z^{[i]}(t)-z^{*}\|_{2} \leq 0.1$. In Figure
\ref{fig.RobustLP}, the completion time for two different communication graphs
is illustrated. We compare random Erd{\H{o}}s-R{\'e}nyi graphs, with edge
probability $p=1.2 \frac{\log(n)}{n}$, and circulant graphs with $5$
out-neighbors for each processor. It can be seen in Figure \ref{fig.RobustLP}
that the number of communication rounds grows with the network size for the
circulant graph, which have a growing diameter, but remains almost constant for
the random Erd{\H{o}}s-R{\'e}ny graphs, which have always a small diameter. The
simulations suggests, that the completion time depends primarily on the
\emph{diameter} of the communication graph.

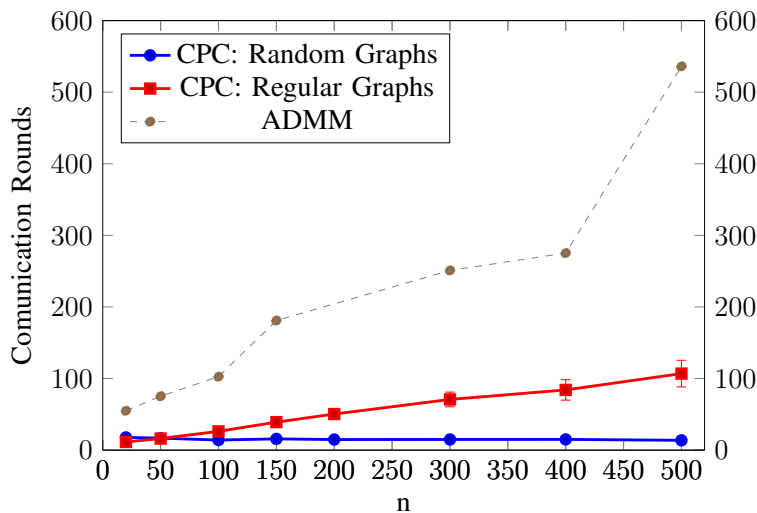
\begin{figure}[h!]
  \begin{center}
    \scalebox{0.9}{ \begin{tikzpicture}
	
    \begin{axis}[
     width=3.5in,
     height=2.5in,
        axis y line*=right,
		scale only axis,
		xmin=0,xmax=520,
		ymin=0, ymax = 600]
	%
	%
    \end{axis}	
	
	\begin{axis}[
	 width=3.5in,
     height=2.5in,
	    axis y line*=left,
	    xlabel = n,
		ylabel = Comunication Rounds,
		scale only axis,
		xmin=0,xmax=520,
		ymin=0, ymax=600,
		legend pos=north west]
		
		\addplot+[	line width=1.2pt,
		error bars/.cd, y dir=both,y explicit]
		coordinates{
		(20,17.8) +- (0,3.0103)
		(50,16.7) +- (0,1.7122)
		(100,14.1) +- (0,2.1651)
		(150,15.7) +- (0,1.3117)
		(200,14.78) +- (0,3.4394)
		(300,14.9) +-	(0,1.6026)
		(400,15) +- (0,2.3127)
		(500,13.7) +- (0,1.1476)
		};
		\addlegendentry{CPC: Random Graphs}
		\addplot+[	
		line width=1.2pt,
		error bars/.cd, y dir=both,y explicit]
		coordinates{
		(20,11.4) +- (0,1.9889)
		(50,16.2) +- (0,2.5174)
		(100,26.2) +- (0,2.7474)
		(150,39.1) +- (0,4.1958)
		(200,50.4) +- (0,5.9702)
		(300,70.8) +-	(0,10.2584)
		(400,84.1) +- (0,14.2777)
		(500,106.9) +- (0,18.5951)
		};
		\addlegendentry{CPC: Regular Graphs}

        \addplot+[color=gray, dashed]
		coordinates{
	    (20,55)
	    (50,75.3)
		(100,102.6)
		(150,181)
		(300,251)
		(400,275.25)
		(500,536)
		};
		\addlegendentry{ADMM}
	\end{axis}
\end{tikzpicture} }
    \caption{Average number of communication rounds and 95\% confidence interval
      required to compute the optimal solution to randomly generated robust
      linear programs with a precision of $\epsilon < 0.1$ for
      Erd{\H{o}}s-R{\'e}nyi graphs (blue) and circulant graphs (red) with the
      \CPC (CPC) algorithm. The dashed line shows for comparison the number of
      iterations of the ADMM algorithm with dual decomposition (dashed line).
    } \label{fig.RobustLP}
  \end{center}
\end{figure}

We consider for a comparison the ADMM algorithm combined with a
dual-decomposition, as described, e.g., in \cite[pp. 48]{Boyd2010}, to solve the
nominal conic quadratic problem representation \eqref{prob.RobustLP_SOCP} of the
robust optimization problem.\footnote{We use in the simulations a step-size
  $\rho = 200$, see \cite[Chapter 7]{Boyd2010} for the notation. Please note
  that the choice of the step-size of the ADMM method has to be done
  heuristically. We have selected the step-size as the best step-size we found
  experimentally for the smallest problem scenario $n=20$. Although the
  convergence speed of the ADMM method might improve with another step-size, in
  our experience most heuristic choices led to a significant deterioration of
  the performance.}
In one iteration of the ADMM algorithm, all processors must update their local
variables synchronously and then compute the average of all decision variables.
Figure \ref{fig.RobustLP} (right axis) shows the number of iterations of the
ADMM to compute the solution to the random linear programs with the same
precision as the \CPC algorithm. Note that the ADMM algorithm requires almost
three times more iterations than the \CPC algorithm requires communication
rounds.
Note also that the ADMM algorithm requires for each iteration an averaging of
the local solutions, which can be done by a consensus algorithm. Taking into
account that the number of communication rounds required to compute an average
by a consensus algorithm is lower bounded by
$\Omega\left(n^{2}\log(\frac{1}{\delta})\right)$, where $\delta$ is the desired
precision \cite{Olshevsky2009}, it is obvious that processors running the ADMM
algorithm need to communicate significantly more often than processors running
the \CPC algorithm.
Although the simulations do not compare the time-complexity of the
  algorithms in terms of computation units, they clearly suggest that the \CPC
algorithm is advantageous for applications where communication is costly or time
consuming.

\section{Separable Cost Optimization with Distributed Column
  Generation} \label{sec.SeparableCost}

The general convex problem set-up \eqref{prob.Basic} covers also the very
important class of \emph{almost separable optimization problems}, i.e., problems
where each processor is assigned local decision variables with a local objective
function and the local variables are coupled by a coupling constraint.
We sketch here the application of the \CPC algorithm to convex problems with
separable costs and linear coupling constraints of the form
\begin{align}
  \begin{split} \label{prob.SeparableCost}
    \min  & \quad \sum_{i=1}^{n} f_{i}(x_{i}) \\
    \mathrm{s.t.} &\quad \sum_{i=1}^{n} G_{i}x_{i} = \mathbf{h}, \quad x_{i} \in
    \mc{X}_{i},
  \end{split}
\end{align}
where $x_{i} \in \mathbb{R}^{m_{i}}$ is the decision vector assigned to
processor $i$, $f_{i}: \mathbb{R}^{m_{i}} \mapsto \mathbb{R}$ is a convex
objective function processor $i$ aims to minimize, and $\mc{X}_{i} \subset
\mathbb{R}^{m_{i}}$ is a convex set, defining the feasible region for the
decision vector $x_{i}$. For the clarity of presentation, we assume here that
all sets $\mc{X}_{i}$ are bounded, although this assumption can be relaxed.
The local decision variables $x_{i}$ are all coupled by a linear separable
constraint with a right-hand side vector $\mathbf{h} \in \mathbb{R}^{r}$. The
coupling linear constraint is of dimension $r$, and we assume here that $r$ is
small compared to the number of decision variables, i.e., $r \ll
\sum_{i=1}^{n}m_{i}$.

The problem formulation \eqref{prob.SeparableCost} is the standard formulation
considered for large scale optimization with decomposition methods
\cite{Lasdon2002}. Standard large-scale optimization methods for
\eqref{prob.SeparableCost} exploit the separable structure of the dual problem,
and define a coordinating master program and several sub-problems, leading to a
structure as shown in Figure \ref{fig.DecompositionStructures}(a). In contrast,
we are seeking an optimization method without a master problem using only
asynchronous message-passing between neighboring processors, as visualized in
Figure \ref{fig.DecompositionStructures}(b).

The method we propose here is strongly related to the classical
\emph{Dantzig-Wolfe (DW) decomposition} or \emph{column generation}
\cite{Dantzig1961}, \cite{Lasdon2002}. The DW decomposition is dual to the
cutting-plane method, see e.g., \cite{Eaves1971}. We exploit this duality
relation here. Once again we want to stress that the DW decomposition requires a
coordinating master problem, which is not required for our algorithm.  In
\cite{Burger2011b} we proposed a similar algorithm for purely linear programs taking only the primal perspective on the problem.

The problem \eqref{prob.SeparableCost} can be formulated in the general
framework \eqref{prob.Basic}, when its dual is considered. Let $\pi \in
\mathbb{R}^{r}$ be the dual variable corresponding to the coupling
constraint. The dual problem to \eqref{prob.SeparableCost} can then be written
as
\begin{align*}
  \begin{split}
    \max_{\pi} \; -\mathbf{h}^{T}\pi + \sum_{i=1}^{n} \left\{ \min_{x_{i} \in
        \mc{X}_{i}} f_{i}(x_{i}) + \pi^{T}G_{i}x_{i} \right\}.
  \end{split}
\end{align*}
One can now define a new variable $u_{i} := \min_{x_{i} \in \mc{X}_{i}}
f_{i}(x_{i}) + \pi^{T}G_{i}x_{i},$ leading to the alternative representation of
the dual as
\begin{align} \label{prob.ColumnDual}
  \begin{split}
    \max_{\pi,u_{i} } \; &-\mathbf{h}^{T}\pi + \sum_{i=1}^{n} u_{i} \\
    &(\pi,u) \in \{(\pi,u): u_{i} \leq f_{i}(x_{i}) + \pi^{T}G_{i}x_{i},\;
    \forall x_{i} \in \mc{X}_{i} \}.
  \end{split}
\end{align}
This problem is explicitly in the form \eqref{prob.Basic} with $z = [\pi^{T},
u_{1},\ldots,u_{n}]^{T} \in \mathbb{R}^{r+n}$, $c = [-\mathbf{h}^{T}, \1^{T}_{n}
]^{T}$ and $ \mc{Z}_{i} := \{ (\pi,u_{i}) : u_{i} \leq f_{i}(x_{i}) +
\pi^{T}G_{i}x_{i}, \forall x_{i} \in \mc{X}_{i} \}.  $ The cutting-plane oracle
can now be defined as follows. A query point is denoted as $z_{q} =
[\pi_{q}^{T}, u_{q,1}, \ldots, u_{q,n}]^{T}$ and is contained in the set
$\mc{Z}_{i}$ if and only if
\begin{align*}
  u_{q,i} \leq f_{i}(x_{i}) + \pi_{q}^{T}G_{i}x_{i},\quad \forall x_{i} \in
  \mc{X}_{i}.
\end{align*}
\begin{quote} \textbf{Constraint Generating Oracle:} Let $\bar{x}_{i}$ denote
  the optimal solution vector to
  \begin{align} \label{prob.SubProblem} \min_{x_{i}} \; f_{i}(x_{i}) +
    \pi_{q}^{T}G_{i}x_{i},\quad \mathrm{s.t.\;} x_{i} \in \mc{X}_{i}
  \end{align}
  and let $\gamma_{i}^{*}$ be the optimal value of \eqref{prob.SubProblem}.
  If $u_{q,i} > \gamma_{i}^{*} $ then $z_{q} \notin \mc{Z}_{i}$.
  A cutting plane separating $z_{q}$ and $\mc{Z}_{i}$ is then
  \begin{align}
    u_{i} - f_{i}(\bar{x}_{i}) - \pi^{T}G_{i}\bar{x}_{i} \leq 0.
  \end{align}
\end{quote}
Clearly, $u_{q,i} - f_{i}(\bar{x}_{i}) - \pi_{q}^{T}G_{i}\bar{x}_{i} >
0$ for $ (\pi_{q}, u_{q}) \notin \mc{Z}_{i}$ and $u_{q,i} - f_{i}(\bar{x}_{i}) -
\pi_{q}^{T}A_{i}\bar{x}_{i} \leq 0$ for all $(\pi,u) \in \mc{Z}_{i}$.
Also, Assumption \ref{ass.Separator} holds since $s(z_{q}) = u_{q,i} -
f_{i}(\bar{x}_{i}) - \pi_{q}^{T}G_{i}\bar{x}_{i}$ and $s(z_{q}) \rightarrow 0$
implies $(\pi_{q},u_{q}) \in \mc{Z}_{i}$.

The proposed procedure of constructing a constraint is known as ``constraint
generation" or, taking the primal perspective, as ``column generation".
We name \eqref{prob.SubProblem} the \emph{local subproblem} $SP_{i}$, since it
corresponds to the subproblem of the DW decomposition.
The approximate linear program formed by each processor is called here
\emph{local master problem} $MP_{i}$, since it is a local version of the master
program of the DW-decomposition. 

It is worth noting that here $z = [\pi^{T}, u_{1},\ldots,u_{n}]^{T}$ and thus
the dimension of the problem, $d = r+ n$, is no longer independent of the number
of processors. Additionally, the set-up considered in this section requires a
unique identifier to be assigned to each processor. These two additional
restrictions have to be taken into account for an implementation of the
algorithm.
\begin{figure}[t]
  \begin{center}
    \subfigure[Structure of the classical Dantzig-Wolfe
    decomposition.]{ \includegraphics[width=0.4\textwidth, trim=4cm 18cm 3cm 4cm]{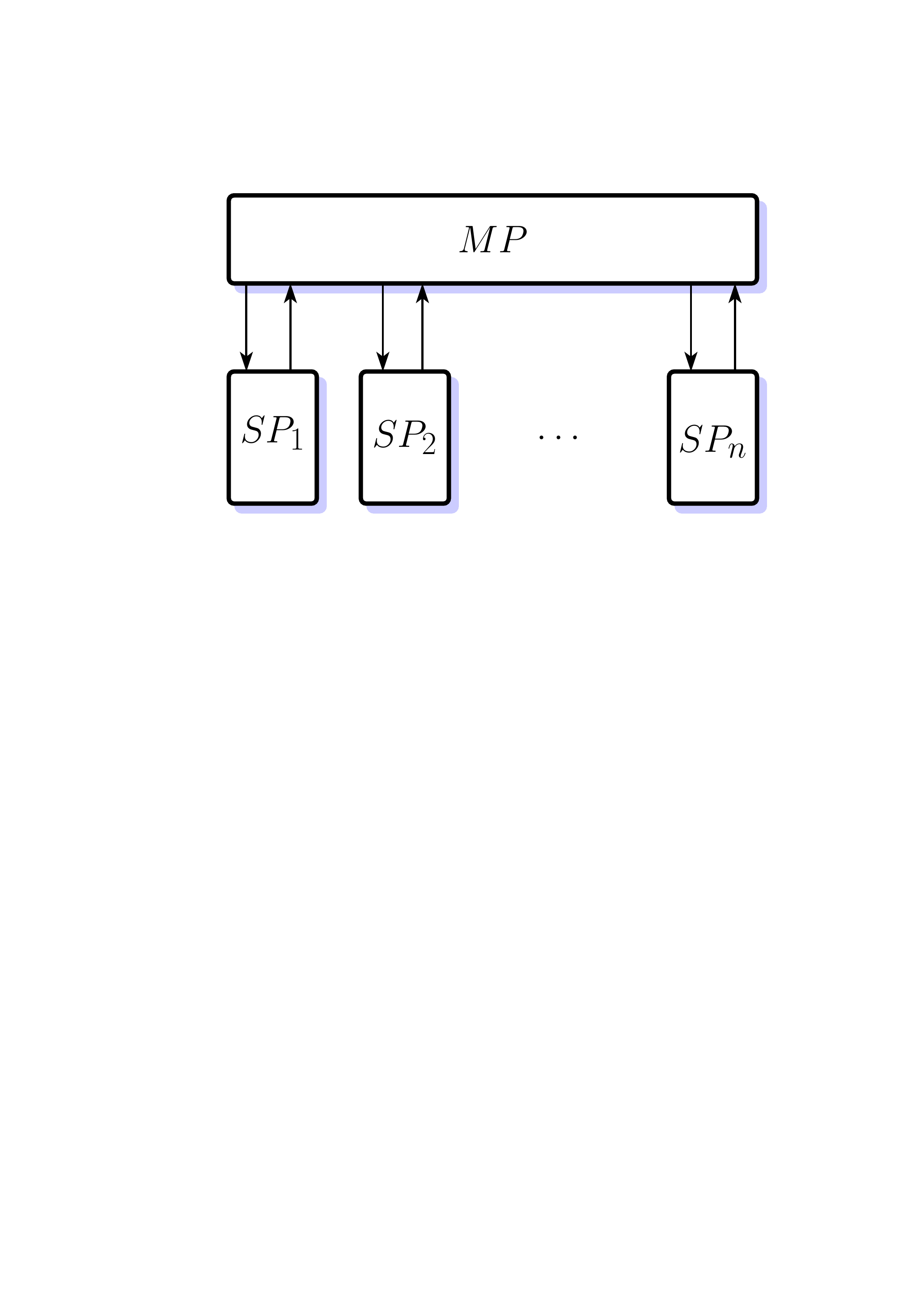}} \label{fig.DecompositionStructures_DW}
 \hfill
    \subfigure[Structure of the \CPC algorithm for separable
    problems.]{\includegraphics[width=0.45\textwidth, trim=4cm 17.5cm 0.5cm 5cm]{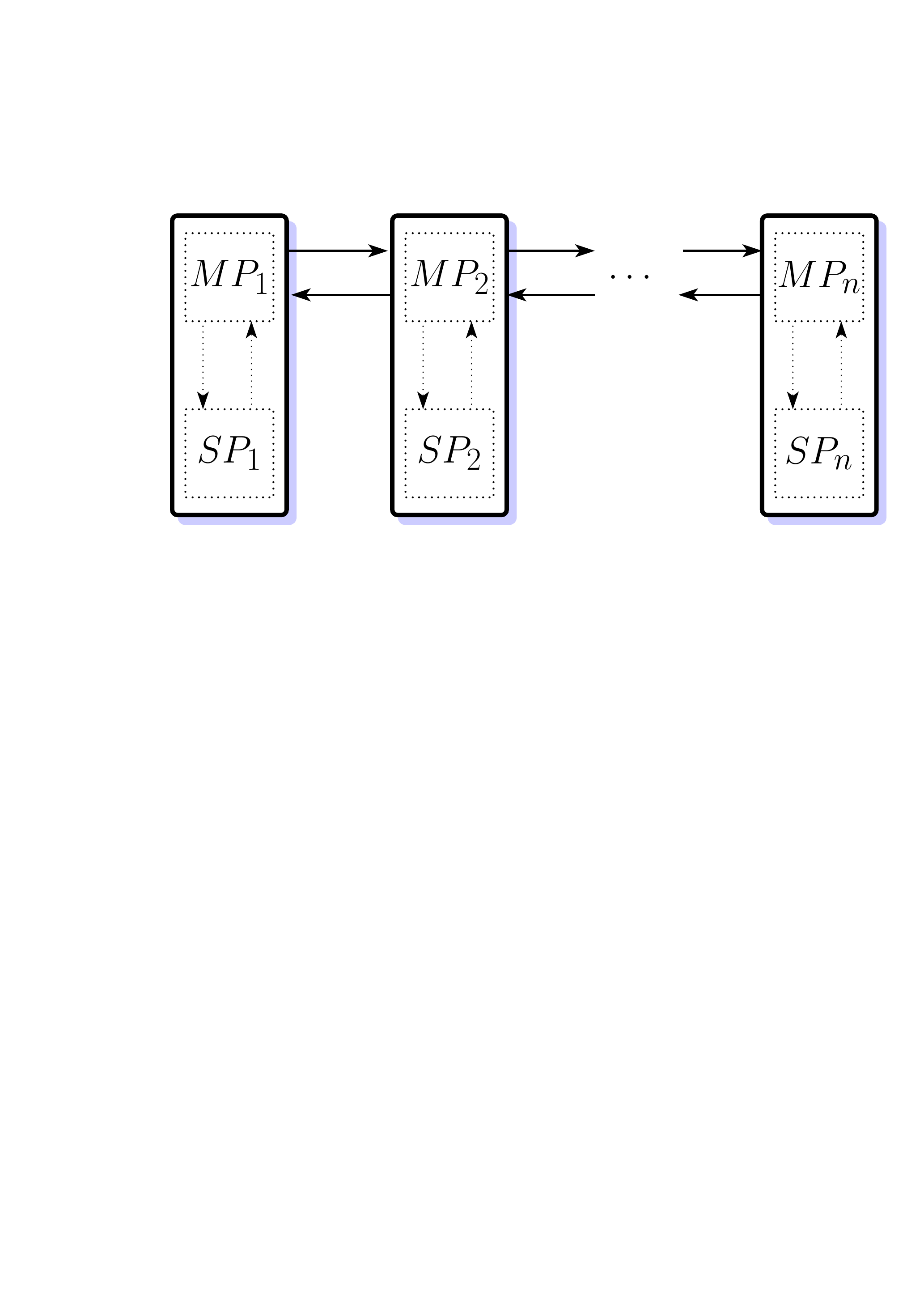}}
    \label{fig.DecompositionStructures_CPC}
    \caption{Comparison of the classical master / subproblem structure of the DW
      decomposition and the peer-to-peer structure of the \CPC
      algorithm.} \label{fig.DecompositionStructures}
  \end{center}
\end{figure}

The \CPC algorithm is applied here to the dual problem and will compute the dual
solution to \eqref{prob.SeparableCost}, i.e., $$\lim_{t\rightarrow \infty}
\|\pi^{[i]}(t) -\pi^{*}\|_{2} \rightarrow 0. $$
If all $f_{i}(\cdot)$ in \eqref{prob.SeparableCost} are strictly convex, the
solutions of the local subproblems \eqref{prob.SubProblem} of each processor
will converge to the optimal solution, i.e., $\lim_{t \rightarrow \infty}
\|\bar{x}_{i}^{[i]}(t) - x_{i}^{*}\| \rightarrow 0, \; \mbox{for\;all\;} i \in
V, $ where $x^{*}=[x_{1}^{*},\ldots,x_{n}^{*}]$ is the optimal primal solution
to \eqref{prob.SeparableCost}, and $\bar{x}_{i}^{[i]}(t)$ is the solution to
\eqref{prob.SubProblem} computed by processor $i$ at time $t$.
However, this is not true if some $f_{i}(\cdot)$ are only convex but not
strongly convex. Then recovering a primal optimal solution from the dual
solution can be done using the method known from DW decomposition.
We assume that each processor stores the points at which a constraint is
generated, $\bar{x}^{[i]}(\tau)$, where the index $i$ indicates which processor
computed at time $\tau$ the point $\bar{x}_{i}(\tau)$ as solution to
\eqref{prob.SubProblem}.
Define $\bar{G}_{i\tau} := G_{i}\bar{x}_{i}(\tau)$ and $\bar{f}_{i\tau} :=
f_{i}(\bar{x}_{i}(\tau))$.
The scalar inequalities of the approximate linear program are all of the form
\begin{align} \label{eqn.GeneratedConstraint} u_{i} -\bar{G}_{i\tau}^{T} \pi
  \leq \bar{f}_{i\tau}.
\end{align}
One can now formulate the linear programming dual to the approximate program
\eqref{prob.BasicApproxLP}. Let $\lambda_{j\tau} \in \mathbb{R}_{\geq 0}$ be the
Lagrange multiplier to the constraint \eqref{eqn.GeneratedConstraint}, the
linear programming dual to \eqref{prob.BasicApproxLP} is a linear program with
the following structure:
\begin{align} \label{prob.ApproxDualLP}
  \begin{split}
    \min_{\lambda_{i\tau} \geq 0}\; &\sum_{i=1}^{n} \sum_{\tau} \bar{f}_{i\tau} \lambda_{i\tau} \\
    &\sum_{i=1}^{n} \sum_{\tau} \bar{G}_{i\tau}\lambda_{i\tau} = \mathbf{h},
    \quad \sum_{\tau} \lambda_{i\tau} = 1, \; i \in \{1,\ldots,n\}.
  \end{split}
\end{align}

We assume in the following that all processors have the same set of constraints
\eqref{eqn.GeneratedConstraint} as their basis. Please note that this can be
achieved by halting the algorithm at some time and running a suitable agreement
mechanism, such as the one proposed in \cite{Notarstefano2009}.
A processor can now reconstruct its component of the solution vector as the
convex combination $ x_{i}^{*} = \sum_{\tau}
\bar{x}_{i}(\tau)\lambda_{i\tau}^{*},$ where $\lambda_{i\tau}^{*}$ solves
\eqref{prob.ApproxDualLP}.
The resulting solution vector $x^{*} = [ x_{1}^{*},\ldots, x_{n}^{*}]$ is
globally feasible since $ \sum_{i=1}^{n} \sum_{\tau}
\bar{G}_{i\tau}\lambda_{i\tau}^{*} = \sum_{i=1}^{n} G_{i} \left(\sum_{\tau}
  \bar{x}_{i}(\tau)\lambda_{i\tau}^{*} \right) = \sum_{i=1}^{n}G_{i}x_{i}^{*}=
\mathbf{h}.  $
Additionally, if all processors have computed the globally optimal solution to
\eqref{prob.ColumnDual}, then the recovered $x_{i}^{*}=\sum_{k}
\bar{x}_{i}(\tau)\lambda_{i\tau}$ is also the optimal primal solution to
\eqref{prob.SeparableCost}.
To see this note that strong duality implies that the optimal value of
\eqref{prob.ApproxDualLP} is equivalent to the value of the linear approximate
problem \eqref{prob.BasicApproxLP}, which we denote with $f^{*}$. Thus, $f^{*}=
\sum_{i=1}^{n} \sum_{\tau=1} f_{i}(\bar{x}_{i}(\tau))\lambda_{i\tau}^{*}$.
Convexity of $f_{i}(\cdot)$ and $\sum_{\tau} \lambda_{i\tau} = 1$ implies that $
f^{*} = \sum_{i=1}^{n} \sum_{\tau} f_{i}(\bar{x}_{i}(\tau))\lambda_{i\tau}^{*}
\geq \sum_{i=1}^{n} f_{i}(\sum_{\tau} \bar{x}_{i}(\tau) \lambda_{i\tau}^{*}) =:
\sum_{i=1}^{n} f_{i}(x_{i}^{*}).  $ Since $x^{*}=[x_{1}^{*},\ldots,x_{n}^{*}]$
is a feasible solution it must hold that $\sum_{i=1}^{n} f_{i}(x_{i}^{*}) =
f^{*}$.
Please note that the proposed method requires each processor to store its own
local solutions $\bar{x}_{i}^{[i]}(t)$ to \eqref{prob.SubProblem} generated
during the evolution of the algorithm, but does not require that the processors
exchange those solutions. For a more explicit discussion on the reconstruction
of the feasible solution, we refer the reader to the literature on nonlinear
DW-decomposition \cite{Lasdon2002} or our recent paper \cite{Burger2011b}.

\subsection*{Application Example: Distributed Microgrid Control }
The previous discussion shows that the \CPC algorithm is applicable for many
important control problems, such as for example distributed microgrid control.
Microgrids are local collections of distributed energy sources, energy storage
devices and controllable loads.
Most existing control strategies still use a central controller to optimize the
operation \cite{Zamora2010}, while for several reasons, detailed, e.g., in
\cite{Zamora2010}, distributed control strategies, which do not require to
collect all data at a central coordinator, are desirable.

We consider the following optimization model of the microgrid, described
recently in \cite{Kraning2012}. A microgrid consists of several generators,
controllable loads, storage devices and a connection to the main grid over which
power can be bought or sold. In the following, we use the notational convention
that energy generation corresponds to positive variables, while energy
consumption corresponds to negative variables.
A \emph{generator} generates power $p_{gen}(t), t \in [0,T]$ within the absolute
bounds $\underline{p}(t) \leq p_{gen}(t) \leq \bar{p}(t)$ and the rate
constraints $\underline{r}(t) \leq p_{gen}(t+1) - p_{gen}(t) \leq
\bar{r}(t)$. The cost to produce power by a generator is modeled as a quadratic
function $f_{gen}(t) = \alpha p_{gen}(t) + \beta p^2_{gen}(t)$.
A \emph{storage device} can store or release power $p_{st}(t), t \in [0,T]$
within the bounds $-d_{st} \leq p_{st}(t) \leq c_{st}$. The charge level of the
storage device is then $ q_{st}(t) = q_{st, init} +
\sum_{\tau=0}^{t}p_{st}(\tau)$ and must be maintained between $0 \leq q_{st}(t)
\leq q_{max}$. Note that $p_{st}(t)$ takes negative values if the storage
device is charged and positive values if it is discharged.
A \emph{controllable load} has a desired load profile $l_{cl}(t)$ and
incorporates a cost if the load is not satisfied, i.e., $f_{cl}(t) =
\alpha(l_{cl}(t)-p_{cl}(t))_{+}$, where $(z)_{+} = \max\{0,z\}$.
Finally, the microgrid has a single control unit, which coordinates the
connection to the main grid and can trade energy. The maximal energy that can be
traded is $|p_{tr}| \leq E$. The cost to sell or buy energy is modeled as $
f_{tr} = -c^{T}p_{tr} + \gamma^{T} |p_{tr}|$ where $c$ is the price vector and
$\gamma$ is a general transaction cost.

The power demand $D(t)$ in the microgrid is predicted over a horizon $T$. The
control objective is to minimize the cost of power generation while satisfying
the overall demand.  This control problem can be directly formulated as in the
form \eqref{prob.SeparableCost}, with
the local objective functions $f_{i} = \sum_{t = 0}^{T}f_{i}(t)$,
the right-hand side vector of the coupling constraint as the predicted demand
$\mathbf{h}= [D(1), \ldots, D(T)]^{T}$ and $\mathcal{X}_{i}$ as the local
constraints of each unit.

The \CPC algorithm can solve this problem in a distributed way. Note that the
objective functions $f_{i}$ considered here are all convex, but not strictly
convex. If all objective functions were strictly convex, one could use the
distributed Newtons method \cite{Zargham2011}, which has locally a quadratic
convergence rate. However, the distributed Newton method does not apply to this
problem formulation. The \CPC algorithm does not require strict convexity of the
cost functions.

We present simulation results for an example set-up with $n=101$ decision units,
i.e., 60 generators, 20 storage devices, 20 controllable loads and one
connection to the main grid. A random demand is predicted for 15 minute time
intervals over a horizon of three hours, based on a constant off-set, a
sinusoidal growth and a random component.
The algorithm is initialized with each processor computing a basis out of the
box-constraint set $\{z : -10^{5} \cdot \1 \leq z \leq 10^{5}\cdot \1 \}$,
leading to a very high initial objective value.
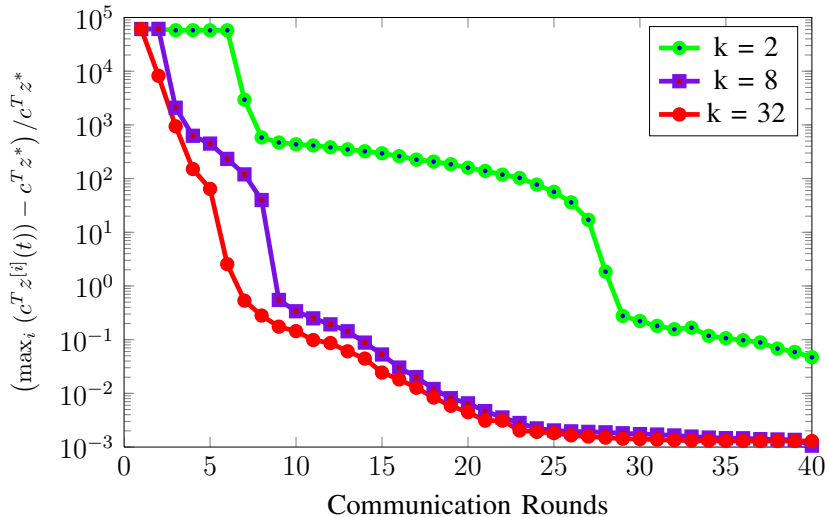
\begin{figure}[h!]
  \begin{center}
     \scalebox{0.9}{ 
%
%
%
%
\begin{tikzpicture}

\definecolor{mycolor1}{rgb}{0.47843137254902,0.0627450980392157,0.894117647058824}

\begin{semilogyaxis}[%
width=4in,
height=2.5in,
scale only axis,
xmin=0, xmax=40,
xlabel={Communication Rounds},
ymin=0.001, ymax=100000,
yminorticks=true,
ylabel={{\small $\left(\max_{i}\; (c^{T}z^{[i]}(t)) -c^{T}z^{*}\right)/c^{T}z^{*}$}},
axis on top]
\addplot+ [
color=green,
solid,
line width=2.0pt
]
coordinates{
 (1,61041.810800659)(2,60916.2449136258)(3,57878.6936370382)(4,58025.6893711037)(5,57795.628521469)(6,57789.2423779345)(7,2953.80995874)(8,582.504467829637)(9,469.569036543717)(10,435.438949695347)(11,414.037861447156)(12,382.939497528563)(13,349.790241058142)(14,323.21682647535)(15,296.425407793386)(16,261.492639203813)(17,225.505292386678)(18,207.575696253664)(19,185.287018120244)(20,159.635777400795)(21,139.085670834991)(22,118.452386035394)(23,102.980480409093)(24,77.3120499957826)(25,56.7336076376978)(26,36.1385614292732)(27,17.1025633985353)(28,1.84769558741159)(29,0.276700896843628)(30,0.223010142467051)(31,0.180721728729839)(32,0.156720877351767)(33,0.167259745732316)(34,0.118077100963003)(35,0.106550140220224)(36,0.0977627375565929)(37,0.0889385157178269)(38,0.0682922376902758)(39,0.0590736151494529)(40,0.0470325505277574) 
};
\addlegendentry{k = 2}

\addplot+ [
color=mycolor1,
solid,
line width=2.0pt
]
coordinates{
 (1,61041.8134388754)(2,61378.4104113139)(3,2087.43598991165)(4,627.14830721006)(5,448.772502527723)(6,232.791660748588)(7,119.829990361373)(8,39.9122182721396)(9,0.549643025081249)(10,0.338165947227444)(11,0.249553863774701)(12,0.193129018721244)(13,0.143359263788978)(14,0.0883164915868461)(15,0.053036588399771)(16,0.0302490829895319)(17,0.0200616724885256)(18,0.0120704927907592)(19,0.00813150860548946)(20,0.00652119561169848)(21,0.0046678566797171)(22,0.00356178465474531)(23,0.00280610425412551)(24,0.0022298121226671)(25,0.0020405452649379)(26,0.00196352552972864)(27,0.00191918982562351)(28,0.00188331371732702)(29,0.00180695488208189)(30,0.00174257501747803)(31,0.00171039184326654)(32,0.00164566164797043)(33,0.0015673043251899)(34,0.00150423366501222)(35,0.00146440522261942)(36,0.00145879756886379)(37,0.00140338632596726)(38,0.00138561843977964)(39,0.00136110501498539)(40,0.00106173546346672) 
};
\addlegendentry{k = 8}
\addplot+ [
color=red,
solid,
line width=2.0pt
]
coordinates{
 (1,61041.8204440049)(2,8214.95259187378)(3,941.55688224359)(4,150.623584032368)(5,64.3678322561843)(6,2.55010524133251)(7,0.533864396813393)(8,0.28146634615064)(9,0.174632496601746)(10,0.144214212473858)(11,0.0984714410426233)(12,0.08662048999983)(13,0.060725304459828)(14,0.0444374807842885)(15,0.0243988076258519)(16,0.0181718600637782)(17,0.0127005584686388)(18,0.0083612382934639)(19,0.00585946649018298)(20,0.00445641841938799)(21,0.0031008579411624)(22,0.0031008579411624)(23,0.00203727469521974)(24,0.00191791739761034)(25,0.0018208542200473)(26,0.00165243066919092)(27,0.00156982812760242)(28,0.00149836767256575)(29,0.00143018487066668)(30,0.00140765492179307)(31,0.00137295391672404)(32,0.00133832169718873)(33,0.00132370882938308)(34,0.00131184162306753)(35,0.00129507524724541)(36,0.00129507524724541)(37,0.00129507524724541)(38,0.00129507327069365)(39,0.00129507524724644)(40,0.00129507524724541) 
};

\addlegendentry{k = 32}
\end{semilogyaxis}
\end{tikzpicture} }
    \caption{Trajectories of the scaled maximal optimal value of the linear
      approximate programs for different $k$-regular communication graphs
      $\mc{G}_{c}$.  } \label{fig.Microgrid}
  \end{center}
\end{figure}
Figure \ref{fig.Microgrid} shows the largest objective value over all
processors, relative to the best solution found as the algorithm is continued to
perform.  The evolution of the objective value is shown for three different
$k$-regular graphs.
It can be clearly seen that the convergence speed depends strongly on the
structure of the communication graph. The convergence for a network with a
2-regular communication structure is significantly slower than for a network
with a higher regular graph. We also want to emphasize the observation that the
difference in the convergence speed between $k=8$ and $k=32$ is not as big as
the increased communication would let one expect. This shows that the
improvement obtained from more communication between the processors becomes
smaller with more communication. A good performance of the algorithm can also be
obtained with little communication between the processors.
Please note that for all communication graphs the \CPC algorithm requires only
few communication rounds to converge to a fairly good solution.
Although the convergence to an exact optimal solution might take more
iterations, a good sub-optimal solution can be found after very few
communication rounds. This property makes the \CPC attractive for control and
decision applications.

\section{Discussion and Conclusions} \label{sec.Conclusions}

We proposed a framework for distributed convex and robust optimization using a
polyhedral approximation method.
As a general problem formulation, we consider problems where convex constraint
sets are distributed to processors, and the processors have to compute the
optimizer of a linear objective function over the intersection of the constraint
sets.
We proposed the novel \CPC algorithm as an asynchronous algorithm performing in
peer-to-peer networks. The algorithm is well scalable to large networks in the
sense that the amount of data each processor has to store and process is small
and independent of the network size.

The appealing property of the considered outer-approximation method lies in the
fact that it imposes very little requirements on the structure of the constraint
sets.
Merely the only requirement is that a cutting-plane oracle exists. We have
presented oracles for various formulations of the constraint sets, in
particular, inequality and convex uncertain or semi-infinite
constraints. Also, we showed that, as the dual problem formulation is
considered, also almost separable convex optimization problems can be formulated
in the proposed framework.
We showed for each of the proposed problem formulations how the cutting-plane
oracle can be defined.

Finally, we illustrated that the proposed set-up is of
interest for various decision and control problems. These include the
localization problem in sensor networks.
They include also less obvious problems as, e.g., distributed microgrid control,
where the novel algorithm can be applied to the dual problem formulation. In
this context we showed that the application of the algorithm to the dual problem
has the major advantage that a feasible solution can be found in a fully
distributed way even before the algorithm has converged to an optimal solution.

\appendix

\setcounter{section}{9}

\subsection{Proofs of Section \ref{sec.OuterApproximation}}

\subsubsection{Proof of Proposition \ref{prop.2normComputation} }
The minimal 2-norm solution is the solution to 
\begin{align}\label{prob.MinNormQP}
\min_{z,y} \frac{1}{2}z^{T}z, \; \mbox{s.t.} \; A_{H}^{T}z \leq b_{H}, \; A_{H}y = c, \; c^{T}z - b_{H}^{T}y = 0,\; y \geq 0,
\end{align}
where the constraints represent the linear programming optimality conditions (KKT-conditions).
The Lagrangian of \eqref{prob.MinNormQP} can be directly determined to be
\begin{align}
\mathcal{L}(z,y,u,l,\alpha) = \frac{1}{2}z^{T}z + u^{T}(A_{H}^{T}z -b_{H}) + l^{T}(A_{H}y-c) + \alpha(c^{T}z-b_{H}^{T}y),\; y,u \geq 0.
\end{align}
It follows now that $y^{*} = \mbox{arg}\min_{y\geq 0} \; \mathcal{L}(z,y,u,l,\alpha) = 0$ if 
$ A_{H}^{T}l - \alpha b_{H} \geq 0.$ 
From $ z^{*} = \mbox{arg} \min_{z}\;, \mathcal{L}(z,y,u,l,\alpha) $ follows that $z^{*} = -A_{H}u - \alpha c.$
The problem \eqref{prob.2NormDual} stated in the proposition is now $\min_{u \geq 0, l, \alpha} \; -\mathcal{L}(z^{*},y^{*},u,l,\alpha)$.  $\hfill \blacksquare$

\vspace{1em}

\subsubsection{Proof of Lemma \ref{prop.QuadraticObjective}}

The minimal 2-norm solution $z_{H}^{*}$ is the unique minimizer of 
\begin{align*}
\begin{split}
\min_{z} \; & \frac{1}{2}\|z\|^{2},\quad \mathrm{s.t.}\; c^{T}z \geq \gamma_{H},\; A_{H}^{T}z \leq b_{H}.
\end{split}
\end{align*}
and satisfies therefore the feasibility conditions
$
c^{T}z^{*}_{H} = \gamma_{H}$ and $A_{H}^{T}z^{*}_{H} \leq b_{H}.$
Since $z_{H}^{*}$ is an optimal solution, there exist multipliers $\mu^{*} \in \mathbb{R}$ and $\lambda^{*} \in \mathbb{R}^{|H|}_{\geq 0}$, such that the KKT conditions are satisfied, i.e.,
\begin{align}
z^{*}_{H} - \mu^{*}c + A_{H} \lambda^{*} &= 0 \label{eq.FOC_QP_1} \\
\lambda^{*T}A_{H}^{T}z^{*}_{H} - \lambda^{*T}b_{H} &= 0. \label{eq.FOC_QP_2} 
\end{align}
Since $z^{*}_{H}$ is also a solution to the original linear program  \eqref{prob.BasicApproxLP}, there also exists a multiplier vector $y^{*} \in \mathbb{R}^{|H|}_{\geq 0}$ satisfying the linear programming optimality conditions
\begin{align}
-c + A_{H}y^{*} &= 0  \label{eq.FOC_LP_1}\\
y^{*T}A_{H}^{T}z^{*}_{H} - b^{T}_{H}y^{*} &= 0. \label{eq.FOC_LP_2}
\end{align} 

We have to show now that the existence of $z^{*}_{H},\mu^{*},\lambda^{*}$ and $y^{*}$ imply, for a sufficiently small $\epsilon$, the existence of a multiplier vector $\pi^{*}$ satisfying the optimality conditions of \eqref{prob.QuadraticPerturbation}, which are
\begin{align}
\begin{split}
-c + \epsilon z^{*}_{H} + A_{H}\pi^{*} &= 0\quad \mbox{and} \quad
\pi^{*T}A_{H}^{T}z^{*}_{H} - \pi^{*}b_{H} = 0. \label{eq.FOC_Perturbed}
\end{split}
\end{align}

We distinguish now the two cases $\mu^{*} > 0$ and $\mu^{*} = 0$. First, assume $\mu^{*} > 0$. We can multiply \eqref{eq.FOC_LP_1} with $\frac{t}{\mu^{*}}$, for arbitrary $t \in (0,1]$, and add to this \eqref{eq.FOC_LP_1}, multiplied by $(1-t)$ to obtain
\begin{align}
\frac{t}{\mu^{*}}z^{*}_{H} - c + A_{H}(\frac{t}{\mu^{*}}\lambda^{*} + (1-t)y^{*}) = 0. \label{eq.MUpos_1}
\end{align}
The same steps can be repeated with \eqref{eq.FOC_QP_2} and \eqref{eq.FOC_LP_2} to obtain
\begin{align}
(\frac{t}{\mu^{*}}\lambda^{*T} + (1-t)y^{*T})(A_{H}^{T}z^{*}_{H} - b_{H}) = 0. \label{eq.MUpos_2}
\end{align}
With \eqref{eq.MUpos_1} and \eqref{eq.MUpos_2}, for any $\epsilon \leq \frac{1}{\mu^{*}}$, one can define $t_{\epsilon} = \epsilon \mu^{*}$. Then 
$
\pi^{*} = \frac{t_{\epsilon}}{\mu^{*}}\lambda^{*T} + (1-t_{\epsilon})y^{*T}
$
solves \eqref{eq.FOC_Perturbed}.
In the second case $\mu^{*}=0$, one can pick an arbitrary $\epsilon >0$, multiply \eqref{eq.FOC_QP_1} (and \eqref{eq.FOC_QP_2}, respectively) with $\epsilon$ and add \eqref{eq.FOC_LP_1} (or \eqref{eq.FOC_LP_2}, respectively) to obtain
$
\epsilon z^{*}_{H} - c + A_{H}(\epsilon \lambda^{*} +y^{*}) = 0$ and $
(\epsilon \lambda^{*} +y^{*})(A_{H}^{T}z^{*}_{H} - b_{H}) = 0.$
Now, $\pi^{*} := (\epsilon \lambda^{*} +y^{*})$ solves \eqref{eq.FOC_Perturbed}. 
$\hfill \blacksquare$

%
%
%
%
%
%

\subsection{Proofs of Section IV: Correctness of the Algorithm} \label{sec.AppendixProofs}

Some technical properties of the algorithm are formalized in the following result. 

\begin{applemma} \label{prop.TechIssues} Let $z^{[i]}(t)$ be the query point and
  $B^{[i]}(t)$ the corresponding basis. Let $\Bc^{[i]}(t)
  \subset \mathbb{R}^{d}$ be the feasible set induced by
  $B^{[i]}(t)$. Then,
  \begin{enumerate}
  \item $\Bc^{[i]}(t) \supset \mc{Z}$ for all $i \in \{1\ldots,n\}$ and $ t \geq
    0$;
  \item $\lim_{t \rightarrow \infty} z^{[i]}(t) = \bar{z}$ and $\bar{z} \in
    \mc{Z}$ implies $\bar{z}$ is a minimizer of \eqref{prob.Basic};
  \item there exists $\underline{\epsilon} > 0$ such that for all $i \in
    \{1,\ldots,n\}$ and all $t \geq 0$, the query points $z^{[i]}(t)$ maximize
    the objective function
 $$\Jeps(z) := c^{T}z - \frac{\epsilon}{2}\|z\|^{2}_{2}$$ 
 over the set of constraints $B^{[i]}(t) \cup Y^{[i]}(t)$ (as defined in (S2))
 for all $\epsilon \in [0,\underline{\epsilon}]$;
\item $\Je(z^{[i]}(t+1)) \leq \Je(z^{[i]}(t))$ for all $i \in \{1,\ldots,n\}$
  and all $t\geq 0$;
\item if $\mc{G}_{c}$ is a strongly connected \emph{static} graph, then
$\Je(z^{[j]}(t+\diam(\mc{G}_{c})) ) \leq \Je(z^{[i]}(t))$
for all $i,j \in \{1,\ldots,n\}$ and all $t\geq0$.
\end{enumerate}
\end{applemma}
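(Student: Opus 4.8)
The plan is to prove the five items in the listed order, since each builds on the previous ones: (i) and (ii) are immediate consequences of the outer-approximation structure, (iii) is the analytic core, and (iv)--(v) follow from (iii) by elementary polyhedral inclusions plus a propagation argument. For (i) I would induct on $t$: at $t=0$ the claim holds because $\mc{B}^{[i]}_{0}\supseteq\mc{Z}_{i}\supseteq\mc{Z}$ by the initialization requirement; for the step, any cutting-plane entering $B^{[i]}(t+1)$ is either inherited from $B^{[i]}(t)$, or from a received $B^{[j]}(t)$ (both valid for $\mc{Z}$ by the induction hypothesis), or is the new cut $\orc(z^{[i]}(t),\mc{Z}_{i})$, which by \eqref{eqn.CuttingPlaneBasic} is satisfied throughout $\mc{Z}_{i}\supseteq\mc{Z}$; passing to a basis only drops constraints, so $\mc{B}^{[i]}(t+1)\supseteq\mc{Z}$. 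For (ii), by (i) the collection $H_{tmp}^{[i]}(t)=B^{[i]}(t)\cup Y^{[i]}(t)$ induces a polyhedron containing $\mc{Z}$, so \eqref{prob.BasicApproxLP} is a relaxation and $c^{T}z^{[i]}(t)=\gamma_{H_{tmp}^{[i]}(t)}\ge\max_{z\in\mc{Z}}c^{T}z=c^{T}z^{*}$; letting $t\to\infty$ gives $c^{T}\bar{z}\ge c^{T}z^{*}$, which together with $\bar{z}\in\mc{Z}$ forces $c^{T}\bar{z}=c^{T}z^{*}$, i.e.\ $\bar{z}$ solves \eqref{prob.Basic}.

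For (iii) the idea is to apply Lemma~\ref{prop.QuadraticObjective} to the polyhedron induced by $H_{tmp}^{[i]}(t)$, for which $z^{[i]}(t)$ is by construction the minimal $2$-norm LP solution; the lemma then gives a threshold below which $z^{[i]}(t)$ maximizes $\Jeps$ over that set. The real content is to make this threshold uniform in $i$ and $t$: I would confine all query points and all generated cutting-plane data to a compact set fixed by the initial data — using (i), the finiteness of the optimal value of \eqref{prob.Basic}, and Assumption~\ref{ass.Separator}(i) — and then note that the threshold supplied by the proof of Lemma~\ref{prop.QuadraticObjective} depends only on $c$ and on the Lagrange multiplier of the constraint $c^{T}z\ge\gamma_{H}$ in the associated minimal-norm quadratic program, a quantity that stays bounded on such a family; the infimum $\underline{\epsilon}$ of these thresholds is then positive. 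Granting (iii), item (iv) follows by a chain of inclusions: letting $w$ be the minimal-norm LP solution of $\mc{B}^{[i]}(t+1)$ — equivalently of $\mc{B}_{tmp}^{[i]}(t)\cap\mc{H}(h(z^{[i]}(t)))$, since $B^{[i]}(t+1)$ is a basis of $B_{tmp}^{[i]}(t)\cup h(z^{[i]}(t))$ — we get that $\Je(z^{[i]}(t+1))$, being the $\Je$-maximum over $\mc{H}_{tmp}^{[i]}(t+1)\subseteq\mc{B}^{[i]}(t+1)$, is at most the $\Je$-maximum over $\mc{B}^{[i]}(t+1)$, which by (iii) equals $\Je(w)$; and $\Je(w)$, being the $\Je$-maximum over $\mc{B}_{tmp}^{[i]}(t)\cap\mc{H}(h(z^{[i]}(t)))\subseteq\mc{B}_{tmp}^{[i]}(t)$, is at most the $\Je$-maximum over $\mc{B}_{tmp}^{[i]}(t)$, which by (iii) equals $\Je(z^{[i]}(t))$ because $z^{[i]}(t)$ is the minimal-norm LP solution of the minimal active set $B_{tmp}^{[i]}(t)$ computed in (S2). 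Hence $\Je(z^{[i]}(t+1))\le\Je(z^{[i]}(t))$.

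For (v), with $\Gc$ a static strongly connected graph, I would propagate along a shortest path $i=v_{0}\to v_{1}\to\cdots\to v_{k}=j$ with $k\le\diam(\Gc)$. Set $\Phi^{[p]}(t):=\max_{z\in\mc{B}^{[p]}(t)}\Je(z)$, which by (iii) equals $\Je$ at the minimal-norm LP solution of $\mc{B}^{[p]}(t)$. Two facts, both from (iii) and the algorithm definition: if $(p,q)\in E$ then at iteration $t$ node $q$ receives $B^{[p]}(t)$, so $\mc{H}_{tmp}^{[q]}(t)\subseteq\mc{B}^{[p]}(t)$ and hence $\Je(z^{[q]}(t))\le\Phi^{[p]}(t)$; and $\Phi^{[q]}(t+1)\le\Je(z^{[q]}(t))$ by the same chain used for (iv). Chaining these along the path gives $\Phi^{[v_{1}]}(t+1)\le\Phi^{[v_{0}]}(t)$, $\Phi^{[v_{2}]}(t+2)\le\Phi^{[v_{1}]}(t+1)$, and so on, ending with $\Je(z^{[j]}(t+k-1))\le\Phi^{[i]}(t)$; since $\Je(z^{[j]}(\cdot))$ is non-increasing by (iv) and $k\le\diam(\Gc)$, and since $\Phi^{[i]}(t)\le\Je(z^{[i]}(t-1))$, replacing $t$ by $t+1$ yields $\Je(z^{[j]}(t+\diam(\Gc)))\le\Je(z^{[i]}(t))$ for all $t\ge0$. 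The step I expect to be the main obstacle is the uniformity of $\underline{\epsilon}$ in (iii): Lemma~\ref{prop.QuadraticObjective} provides only a per-polyhedron threshold, and since the algorithm produces infinitely many distinct polyhedra over time, establishing one common positive lower bound is the delicate point — it forces the compactness argument above to be made precise, in particular controlling the minimal-norm quadratic-program multipliers along the whole run.
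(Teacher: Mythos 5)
Your proposal is correct and follows essentially the same route as the paper: item (i) from the fact that every generated cut contains $\mc{Z}$, item (ii) from the relaxation property, item (iii) from Lemma~\ref{prop.QuadraticObjective}, and items (iv)--(v) from monotonicity of $\Je$ under added cuts plus propagation along the (static, strongly connected) graph. The only substantive difference is that you correctly flag the uniformity of $\underline{\epsilon}$ over all $i$ and $t$ in (iii) as the delicate step and sketch a compactness/bounded-multiplier argument for it, whereas the paper simply asserts that such a uniform $\underline{\epsilon}>0$ can always be found --- so your treatment is, if anything, more careful than the published proof at that point.
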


\begin{proof}
  To see (i), note that any cut $h_{k}$ generated by the oracle of processor
  $i$, $\orc(\cdot,\mc{Z}_{i})$ is such that the half-space $h_{k}$ contains
  $\mc{Z}_{i}$, and in particular $h_{k}$ contains $\mc{Z} =
  \bigcap_{i=1}^{n}\mc{Z}_{i}$. Thus any collection of cuts $H = \bigcup_{k}
  h_{k}$, generated by arbitrary processors is such that $\mc{H} \supset
  \bigcap_{i=1}^{n} \mc{Z}_{i} = \mc{Z}$, and in particular $\mc{B}^{[i]}(t)
  \supset \mc{Z}$.
  The claim (ii) follows since $z^{[i]}(t)$ is computed as a maximizer of the
  linear cost $c^{T}z $ over the collection of cutting-planes
  $H^{[i]}_{tmp}(t)$. The induced polyhedron is such that $\Hc_{tmp}^{[i]}(t)
  \supset \mc{Z}$. Therefore, we can conclude that $c^{T}z^{[i]}(t) \geq
  c^{T}z^{*}$, where $z^{*}$ is an optimizer of \eqref{prob.Basic}. By
  continuity of the linear objective function, we have that $c^{T}\bar{z}\geq
  c^{T}z^{*}$ On the other hand, $c^{T}z \leq c^{T}z^{*}$ for all $z \in
  \mc{Z}$. This proves the statement.
  The statement (iii) follows from Lemma
  \ref{prop.QuadraticObjective}. For any approximate program defined by
  processor $i$ at time $t$, there exists a constant $\bar{\epsilon}_{it} > 0$
  such that $z^{[i]}(t)$ is the unique maximizer of the family of strictly
  concave objective functions $ J_{\epsilon}(z) := c^{T}z -
  \frac{\epsilon}{2}\|z \|^{2}, \quad \epsilon \in [0,\bar{\epsilon}_{it} ], $
  over the set of constraints $B^{[i]}(t) \cup Y^{[i]}(t)$.
  One can now always find $\underline{\epsilon} > 0$ such that
  $\underline{\epsilon} \leq \bar{\epsilon}_{it}$ for all $i\in \{1,\ldots,n\}$
  and $t\geq 0$.
  To see claim (iv), note that adding cutting-planes, either by receiving them
  from neighbors (S2) or by generating them with the oracle (S3), can only
  decrease the value of the strictly concave objective function $\Je(\cdot)$ and
  the basis computation in (S4) keeps, by its definition, the value of
  $\Je(\cdot)$ constant.
  Finally, (v) can be seen as follows. Starting at any time $t$ at some
  processor $i$, at time $t+1$ all processors in $l \in \innbrs(i,t)$ received
  the basis of processor $i$, and compute a query point that satisfies
  $\Je(z^{[l]}(t+1)) \leq \Je(z^{[i]}(t))$ for all $l \in \innbrs(i,t)$. This
  argument can be repeatedly applied to see that, in the static, strongly
  connected communication graph $\mc{G}_{c}$, at least after $\diam(\mc{G}_{c})$
  iterations, all processors in the network have an objective value smaller than
  $\Je(z^{[i]}(t))$. 
\end{proof}

\vspace{1em}

Next we present the proof of Lemma \ref{prop.Convergence}, Lemma \ref{prop.Agreement}, and Theorem \ref{thm.AsymptoticConvergence}.
The following proofs use the parameterized cost function $\Je(\cdot)$. However, for the clarity of presentation we will simplify our notation in the following proofs and write simply $J(\cdot)$ instead of $\Je(\cdot)$.

\subsubsection{Proof of Lemma \ref{prop.Convergence}}   

  All $z^{[i]}(t)$ are computed as maximizers of the common strictly concave
  objective function $J(\cdot)$ (Lemma \ref{prop.TechIssues} (iii)) and
  $J(\cdot)$ is monotonically non-increasing over the sequence of query points
  computed by a processor (Lemma \ref{prop.TechIssues} (iv)).
  Any sequence $\{ J(z^{[i]}(t)) \}_{t \geq 0}$, $i \in \{1,\ldots,n\}$, has
  therefore a limit point, i.e., $\lim_{t \rightarrow \infty} J(z^{[i]}(t))
  \rightarrow \bar{J}^{[i]}$.
  Since the sequence is convergent, it holds that $ \lim_{t\rightarrow \infty}
  \left(J(z^{[i]}(t)) - J(z^{[i]}(t+1)) \right) \rightarrow 0.$ By strict
  concavity of $J(\cdot)$ follows that $J(z^{[i]}(t)) - J(z^{[i]}(t+1)) >
  \sigma \|z^{[i]}(t) - z^{[i]}(t+1)\|^{2}_{2} $ for some $\sigma > 0$.
  Consequently, $\lim_{t\rightarrow \infty} \|z^{[i]}(t) - z^{[i]}(t+1)\|_{2}
  \rightarrow 0$ and the sequence of query points has a limit point, i.e., $
  \lim_{t \rightarrow \infty} \|z^{[i]}(t)- \bar{z}^{[i]}\|_{2} \rightarrow 0$.
  Suppose now, to get a contradiction, that $\bar{z}^{[i]} \notin \mc{Z}_{i}$.
  Then there exists $\delta > 0$ such that all $z$ satisfying $\|z -
  \bar{z}^{[i]}\|_{2} < \delta$ are not contained in $\mc{Z}_{i}$.
  Since $\lim_{t\rightarrow \infty} \|z^{[i]}(t) - \bar{z}^{[i]}\|_{2}
  \rightarrow 0$, there exists a time instant $T_{\delta}$ such that
  $\|z^{[i]}(t) - \bar{z}^{[i]}\|_{2} < \delta$ for all $t \geq T_{\delta}$, and
  thus $z^{[i]}(t) \notin \mc{Z}_{i}$ for $t \geq T_{\delta}$.
  But now, for all $t \geq T_{\delta}$ the oracle $\orc(z^{[i]}(t),\mc{Z}_{i})$
  will generate a cutting-plane according to \eqref{eqn.CuttingPlaneBasic},
  cutting off $z^{[i]}(t)$.
  According to \eqref{eqn.CuttingPlaneBasic}, it must hold that
  $a^{T}(z^{[i]}(t))z^{[i]}(t)-b(z^{[i]}) = s(z^{[i]}(t)) > 0$ and
  $a^{T}(z^{[i]}(t))z^{[i]}(t+1)-b(z^{[i]}) \leq 0$.
  This implies that $a^{T}(z^{[i]}(t))\left( z^{[i]}(t) - z^{[i]}(t+1) \right)
  \geq s(z^{[i]}(t))$ and consequently
  $ \| z^{[i]}(t) - z^{[i]}(t+1)\|_{2} \geq (\| a(z^{[i]}(t))\|_{2})^{-1}
  s(z^{[i]}(t)).  $
  By Assumption \ref{ass.Separator} (i) holds $\| a(z^{[i]}(t))\|_{2} < \infty$
  and thus $\lim_{t\rightarrow \infty} s(z^{[i]}(t)) \rightarrow 0$.
  As a consequence of Ass. \ref{ass.Separator} (ii) follows directly that
  $\bar{z}^{[i]} \in \mc{Z}_{i}$, providing the contradiction.  $ \hfill \blacksquare$ \\

\subsubsection{Proof of Lemma \ref{prop.Agreement}:} 
%
  Let $\bar{J}^{[i]}:= J(\bar{z}^{[i]})$ be the objective value of the limit
  point $\bar{z}^{[i]}$ of the sequence $\{z^{[i]}(t)\}_{t\geq 0}$ computed by
  processor $i$. We show first that the limiting objective values
  $\bar{J}^{[i]}$ are identical for all processors.
  Suppose by contradiction that there exist two processors, say $i$ and $j$,
  such that $ \bar{J}^{[i]} < \bar{J}^{[j]}. $ Pick now $\delta_{0} > 0$
  such that $\bar{J}^{[j]} - \bar{J}^{[i]} > \delta_{0}. $
  The sequences $\{J(z^{[i]}(t)) \}_{t \geq 0}$ and $\{J(z^{[j]}(t)) \}_{t
    \geq 0}$ are monotonically increasing and convergent. Thus, for every
  $\delta > 0$ there exists a time $T_{\delta}$ such that for all $t \geq
  T_{\delta}$, $ J(z^{[i]}(t)) - \bar{J}^{[i]} \leq \delta$ and $
  J(z^{[j]}(t)) - \bar{J}^{[j]} \leq \delta.$
  This implies that there exists $T_{\delta_{0}}$ such that for all $t \geq
  T_{\delta_{0}}$,
$$ J(z^{[i]}(t)) \leq \delta_{0} + \bar{J}^{[i]} < \bar{J}^{[j]}.  $$
Additionally, since the objective functions are non-increasing, it follows that
for any time instant $t' \geq 0$, $J(z^{[j]}(t')) \geq \bar{J}^{[j]}$. Thus,
for all $t \geq T_{\delta_{0}}$ and all $t' \geq 0$,
\begin{align} \label{eqn.Agreement.Cond1} J(z^{[i]}(t)) < J(z^{[j]}(t')).
\end{align}

Pick now $t_{0} \geq T_{\delta_{0}}$. For all $\tau \geq 0$ define now an index
set $I_{\tau}$ as follows: Set $I_{0} = \{i\}$ and for any $\tau \geq 0$ define
$I_{\tau}$ by adding to $I_{\tau-1}$ all indices $k$ for which there exist some
$l \in I_{\tau-1}$ such that $(k,l) \in E(t_{0}+\tau)$. Since, by assumption
$\mc{G}_{c}^{\infty}(t_{0})$ is strongly connected, the set $I_{\tau}$ will
eventually include all indices $1,\ldots,n$, and in particular there is
$\tau^{*}$ such that $j \in I_{\tau^{*}}$.
The algorithm is such that for all $l \in I_{\tau}$, $ J(z^{[l]}(t_{0}+\tau))
\leq J(z^{[i]}(t_{0})) $ and thus
\begin{align} \label{eqn.Agreement.Cond2} 
J(z^{[j]}(t_{0}+\tau^{*})) \leq
  J(z^{[i]}(t_{0})).
\end{align}
But \eqref{eqn.Agreement.Cond2} contradicts \eqref{eqn.Agreement.Cond1}, proving
that $ \bar{J}^{[i]} = \bar{J}^{[2]} = \cdots = \bar{J}^{[n]} =:
\bar{J}.$ Thus, it must hold that for all $i,j \in \{1,\ldots,n\}$,
$\lim_{t\rightarrow \infty} | J(z^{[i]}(t)) - J(z^{j}(t))| \rightarrow 0$.
From the strict concavity of $J(\cdot)$ follows that $|J(z^{[i]}(t)) -
J(z^{[j]}(t))| > \sigma \|z^{[i]}(t) - z^{[j]}(t)\|_{2}^{2},$ for some $\sigma
> 0$.  Therefore, $\lim_{t\rightarrow \infty }\|z^{[i]}(t) - z^{[j]}(t)\|_{2}
\rightarrow \infty$, which proves the theorem.  $ \hfill \blacksquare$ \\

\subsubsection{Proof of Theorem \ref{thm.AsymptoticConvergence}} 
%
  It follows from Lemma \ref{prop.Agreement} that the query points of all
  processors converge to the same query point, i.e., $\bar{z}^{[i]} = \bar{z}$
  for all processors $i$.  Now, we can conclude from Lemma
  \ref{prop.Convergence} that $\bar{z} \in \mc{Z}_{i}$ for all $i$ and thus
  $\bar{z} \in \mc{Z}$. It follows now from Lemma \ref{prop.TechIssues}, part
  (ii), that $\bar{z}$ is an optimal solution to \eqref{prob.Basic}.
  It remains to show that $\bar{z}$ is the optimal solution with minimal
  2-norm. Let $z^{*}$ be the optimal solution with minimal 2-norm. Then there
  exists an $\epsilon > 0$ such that the parameterized objective function satisfies $\Jeps(z^{*}) > \Jeps(z)$ for all $z \in
  \mc{Z}$ and $\Jeps(z^{[i]}(t)) \geq \Jeps(z^{*})$ for all $t$. With the same
  argumentation used for Lemma \ref{prop.TechIssues}, part (ii), we conclude
  that $\bar{z}$ is the unique solution maximizing $\Jeps(\cdot)$ over $\mc{Z}$,
  i.e., $\bar{z}$ is the optimal solution to \eqref{prob.Basic} with minimal
  2-norm.  $\hfill \blacksquare$ \\

\bibliographystyle{IEEEtran}   
\bibliography{Decomposition_long}

\end{document}